\def\Rn{\mathbb{R}^n}
\def\dx{\dot{x}}
\def\Rm{\mathbb{R}^m}
\def\Rr{\mathbb{R}}
\def\Cc{\mathbb{C}}
\def\Koop{\mathcal{K}}
\def\buc{\bar{u}(\cdot)}
\def\bu{\bar{u}}
\def\mcu{\mathcal{U}}
\def\fun{\mathcal{F}}
\def\mcf{\mathcal{F}}
\def\dataip{\{x_k,u_k,y_k\}_{k = 1}^K}
\def\DataNet{\{X_k,U_k,Y_k\}_{k=1}^K}
\def\DataNetAug{\{\bar{X}_k,\bar{Y}_k\}_{k=1}^K}
\def\bA{\bar{A}}
\def\bB{\bar{B}}
\def\bx{\bar{x}}
\def\by{\bar{y}}
\def\bX{\bar{X}}
\def\bY{\bar{Y}}
\def\dbX{\dot{\bar{X}}}
\def\hF{\hat{F}}
\def\hPhi{\bar{\Phi}}
\def\Uhts{\hat{\mathcal{K}}^{T_s}}
\def\bUk{A_K}
\def\mcX{\mathcal{X}}
\def\tL{L_K}
\def\te{\tilde{e}}
\def\tF{\tilde{F}}
\def\mbu{\mathbf{U}}
\def\mbp{\Pi}
\def\mio{\mathit{O}}
\def\dxi{\dot{x}_i}
\def\Nei{\mathcal{N}_i}
\def\Mei{\mathcal{M}_i}
\def\bnei{\bar{\mathcal{N}_i}}
\def\bmei{\bar{\mathcal{M}_i}}
\def\tba{\textbf{a}}
\def\tbb{\textbf{b}}
\def\tg{\tilde{g}}
\def\bq{\textbf{q}}
\def\br{\textbf{r}}
\newcommand\norm[1]{\left\lVert#1\right\rVert}
\def\mcx{\mathcal{X}}
\def\HXU{H_{X,U}}
\def\tphi{\boldsymbol{\phi}}
\def\tpsi{\boldsymbol{\psi}}
\def\bE{\bar{E}}
\begin{document}

\title{Koopman operator based identification of nonlinear networks\thanks{This work was done in full during R Anantharaman's tenure as a postdoctoral researcher at Department of Mathematics and naXys, University of Namur. He was supported by the F.R.S.-FNRS grant PDR-T.0148.22}
}


\author{Ramachandran Anantharaman         \and
        Alexandre Mauroy 
}


\institute{R Anantharaman \at
              Department of Electrical Engineering \\ Eindhoven University of Technology, Netherlands.\\
              \email{r.chittur.anantharaman@tue.nl}           
           \and
           Alexandre Mauroy \at
              Department of Mathematics and Namur Institute for Complex Systems (naXys),\\ University of Namur, Belgium. \\ \email{alexandre.mauroy@unamur.be}
}

\date{Received: date / Accepted: date}

\maketitle

\begin{abstract}
In this work, we develop a method to identify continuous-time nonlinear networked dynamics via the Koopman operator framework. The proposed technique consists of two steps: the first step identifies the neighbors of each node, and the second step identifies the local dynamics at each node from a predefined set of dictionary functions. The technique can be used to either identify the Boolean network of interactions (first step) or to solve the complete network identification problem that amounts to estimating the local node dynamics and the nature of the node interactions (first and second steps). Under a sparsity assumption, the data required to identify the complete network dynamics is significantly less than the total number of dictionary functions describing the dynamics. This makes the proposed approach attractive for identifying large dimensional networks with sparse interconnections. The accuracy and performance of the proposed identification technique are demonstrated with several examples.
\keywords{Network Identification \and Koopman Operator \and Nonlinear Networks}
\end{abstract}

\section{Introduction}

Dynamics defined over a network are useful in the mathematical modeling of electrical grids \cite{Nard_PG2014}, UAV networks \cite{Huang_UAV2021}, vehicular traffic networks \cite{Qu_VN2008,WM_VN2009}, and biological networks  \cite{TLip_MatMotGRN2004,JT_BioNet2019}, to name a few (see also the survey \cite{brugere2018network}). Such dynamics are generally described through a directed graph where each node represents a network component whose local dynamics depends on its own state, on the inputs from the neighboring network components, and on external inputs. Moreover, data-driven network modeling and, in particular, the reverse engineering problem of \textit{network identification} are of specific interest to physics, biology, and engineering applications. Network identification has been extensively studied in different settings. In the most classical setting, the network dynamics (i.e. local dynamics and interactions) are assumed to be known and the network topology is inferred from measurements of the node dynamics. In its simplest form, this problem amounts at constructing a Boolean graph that indicates the existence or absence of an edge between each pair of nodes. A great deal of methods have been proposed in this context, using various techniques from statistics and information theory \cite{brugere2018network,net_ident_info_theory,runge2018causal}, control theory \cite{Materassi2010,Yu_net_estimation}, optimization \cite{net_ident_opti,julius2009genetic}, linearization \cite{Sauer_net_ident}, compressed sensing \cite{net_ident_compressed_sens}, Bayesian inference \cite{chiuso2012bayesian}, etc. See also the introductory review paper \cite{Timme_review} and references therein. Alternative relaxed versions of the network reconstruction problems have also been considered, such as the spectral network identification problem aiming at recovering the eigenvalues of the graph Laplacian \cite{Franceschelli,MH_NetID2017}. In contrast to the above setting, another formulation of the network identification problem assumes that the network topology is known, while the dynamics and the interaction between nodes are unknown. This problem has been tackled in the linear case with unobserved nodes and external inputs, where global transfer functions between inputs and observed nodes are used to identify local transfer functions between neighboring nodes \cite{dankers2015identification,materassi2015identification}. The property of identifiability has also been considered in this context \cite{HGB_NetID2018,HMW_NetID2018}, and more recently in the nonlinear case \cite{vizuete2024nonlinear}.

The many problem settings in network reconstruction have their practical motivations based on constraints on available data and on the a priori knowledge on the network. In some practical scenarios, complete reconstruction of the network is essential. It implies not only inferring the whole network structure, but also identifying the local dynamics and the nature of the interactions, which may be different from one node to another in the case of heterogeneous networks. For linear dynamics, this problem is related to linear parameter estimation or state-space identification \cite{moonen1989and}, for which there exist many solutions and methods (e.g. \cite{GW_NSC2008}). Yet it can be seen from \cite{GW_NSC2008,PCW_NecFSM2013} that under no a priori information on the network, it is necessary to measure all the nodes for the complete reconstruction of the network. In the nonlinear case, this problem is undoubtedly much more involved and the tools developed for linear network identification are not immediately applicable for their nonlinear counterparts, requiring new investigations and algorithms for their identification.

A promising approach to nonlinear network identification is the so-called Koopman operator framework. The Koopman operator \cite{Koopman1} was introduced in the early 1930s to study nonlinear dynamical systems through a linear description of the evolution of observables defined on the state space. This approach received a renewed interest in analysis and control of nonlinear systems after the seminal work \cite{Mezic2005} on the spectral properties of the operator and the development of data driven techniques such as Dynamic Mode Decomposition \cite{Schmid2010} to capture those properties. This body of work lead to numerous data-driven approximations of the Koopman operator, thereby providing linear representations of nonlinear dynamics that can be leveraged in the context of nonlinear network identification. A first step in this direction was made in \cite{MG2020}, where a nonlinear identification problem was recast into a linear identification problem over a lifted dynamics. This approach yielded two methods relying on a computationally attractive linear algebraic formulation. However, both methods developed in \cite{MG2020} are not tailored to large-scale networked systems, not only requiring too many data points and lifted states, but also loosing accuracy as the system dimension grows. More recently, the network identification problem was also considered through the lenses of the Koopman operator in \cite{MO_KoopNetID22}, in a discrete-time context, and under the assumption of homogeneous local dynamics and identical coupling functions.

In this paper, we address the problem of complete network identification in a nonlinear setting through the Koopman operator framework. We consider the case of general nonlinear dynamics, with heterogeneous local dynamics and external inputs on the nodes, and different coupling functions, thereby relaxing the assumptions of \cite{MO_KoopNetID22} in the more involved continuous-time setting. Toward this aim, we leverage the methods developed in \cite{MG2020}, and overcome their limitations which currently impede their use in network identification. In particular, we extend the so-called dual method to non-autonomous systems, design a technique to optimize the choice of test functions, and more importantly develop a two-step method that fully identifies the whole network. In the first step, we use our improved dual method to identify the Boolean network of interactions, thereby identifying the neighbors of each node. Next, in the second step, we solve a local identification problem for each node among the (estimated) neighboring nodes. Under the realistic assumption of a sparse network, our two-step method is scalable (as evidenced by experiments with networks with up to $1500$ nodes), provides an accurate reconstruction of the Boolean network of interactions as well as better estimates of the local dynamics parameters (in comparison with the method developed in \cite{MG2020}). Moreover, as a by-product, this work leads to a modular linear representation of the networked dynamics, which could further be used for system analysis and control design.
Such modular representation is similar in essence to the one proposed in \cite{tellez2022data} in the context of predictive control and to a bilinear formulation recently proposed in \cite{guo2024modularized}.

The rest of the paper is organized as follows. An introduction to the Koopman operator framework and an overview of the identification techniques developed in \cite{MG2020} are provided in Section \ref{sec:prelim}. Section \ref{sec:DualMethod} extends the dual method of \cite{MG2020} to non-autonomous systems and provides a technique to optimize the choice of test functions. In Section \ref{sec:NetID}, we develop the two-step method for network identification. Numerical examples are shown in Section \ref{sec:NE}, which illustrate our methodology and its performance in the context of Boolean network reconstruction and identification of the dynamics. Finally, concluding remarks and perspectives are given in Section \ref{sec:conclu}.

\section{Koopman operator framework for dynamical systems}
\label{sec:prelim}

This section provides some preliminaries on Koopman operator, along with related data-driven techniques and an introduction to a specific application to system identification.

\subsection{The Koopman operator and its infinitesimal generator}

Consider a dynamical system
\begin{align}
    \label{eq:DS}
    \dot{x}(t) = F(x(t)),
\end{align}
where $x \in \Rr^n$ and $F : \Rr^n\to \Rr^n$ are the state and the vector field, respectively. The solution to \eqref{eq:DS} from an initial condition $x_0$ is given through the flow $\Phi^t$ by
\[
x(t) = \Phi^t(x_0).
\]
For a given space of functions $\mcf: \Rr^n \to \Cc$, the Koopman operator semigroup $\Koop^t$ \cite{Koopman1} associated with the system (\ref{eq:DS}) is defined by
\begin{align}
    \label{eq:Koop}
    \Koop^t h(x) = h(\Phi^t(x)),
\end{align}
for all $h \in \mcf$. The Koopman operator provides an alternative viewpoint to study (\ref{eq:DS}) as it is linear over $\mcf$ irrespective of the nonlinearity of the dynamics (\ref{eq:DS}). This linearity property is traded with the dimensionality since the space $\mcf$ is infinite-dimensional for dynamics over $\Rr^n$. Under additional assumptions on the dynamics and the function space $\mcf$, the Koopman operator $\Koop^t$ is a strongly continuous semigroup (${C}_0$-semigroup) in the space $\mcf$ \cite{AM-IM-YS2020}, a property which allows to define an infinitesimal generator $L$ as
\[
L h := \lim_{t \downarrow 0} \frac{\Koop^t h - h}{t},
\]
for all $h$ in the domain of $L$. In this work, unless explicitly mentioned, the space $\fun$ is the space of $L^2$ functions defined over the state space $\mathbb{R}^n$.

The Koopman operator can be extended to non-autonomous systems in the following way. Consider the non-autonomous system 
\begin{align}
    \label{eq:DS_ip}
    \dx(t) = F(x(t),u(t)),
\end{align}
where $x \in \Rn$ and $u \in \Rm$ are the state and inputs respectively. Let $\mcu$ denote the space of admissible control signals $\buc: \Rr_+ \to \Rm$. Given an initial condition $x_0$ and an input signal $\bu \in \mcu$, the solution to (\ref{eq:DS_ip}) is given through the flow
\[
x(t) = \Phi^{t}(x_0,u = \bu).
\]
Consider the space of functions $\fun: \Rn \times \mcu \to \Cc$, the Koopman operator $\Koop^t$ associated with (\ref{eq:DS_ip}) is defined as follows \cite{KM2018}
\begin{align}
    \label{eq:Koopman_ip}
    \Koop^t h(x,\buc) = h(\Phi^t(x,\bu),T^t(\buc))\ \ \ \forall \ \ h \in \fun,
\end{align}
where $T^t$ is the left shift operator defined as $T^t(\bar{u}(\tau)) := \bar{u}(t+\tau)$. 

\subsection{Data-driven approximation of the Koopman operator and lifted dynamics}
Practical algorithms using the Koopman operator rely on finite-dimensional approximations. They are typically data-driven, using snapshots of data $\{x_k,y_k\}_{k = 1}^K$ which satisfy (in the case of autonomous systems)
\[
y_k = \Phi^{T_s}(x_k),
\]
where $T_s$ is the sampling time. The most popular technique is the so-called Extended Dynamic Mode Decomposition (EDMD), which uses least-squares projections to construct a finite-dimensional approximation of the Koopman operator over a chosen subspace $\fun_N \subset \fun$. 

Given a set of linearly independent functions $\{\psi_i\}_{i = 1}^{N}$ that span the finite-dimensional subspace \textbf{$\fun_N$}, the aim of the EDMD algorithm is to construct
\[
\Koop_N := \Pi_N \Koop^{T_s}|_{\mcf_N},
\]
where $\Pi_N$ is the $L^2$ projection. The finite-dimensional operator $\Koop_N$ satisfies
\[
\Koop_N f = \underset{\tilde{f} \in \mcf_N}{\mathrm{argmin}}\norm{\Koop^{T_s}f - \tilde{f}}_2^2 \quad \forall \ f \in L^2(\Rr^n).
\]
Given the data points $\{x_k,y_k\}_{k=1}^{K}$, we can reformulate the above problem as
\begin{align}
\label{eq:EDMD}
\Koop_N f = \underset{\tilde{f} \in \mcf_N}{\mathrm{argmin}} \frac{1}{K} \sum_{k=1}^K \norm{\Koop^{T_s}f(x_k) - \tilde{f}(x_k)}_2^2 \quad \forall \ f \in L^2(\Rr^n),
\end{align}
and solve it as a least squares regression problem. In particular, the finite-dimensional operator $\Koop_N$ can be represented by the matrix
\[
A_N = P_y P_x^\dagger,
\]
where $P_x^{\dagger}$ is the pseudoinverse of $P_x$ and
\[
P_x = \begin{bmatrix} \Psi(x_1) & \Psi(x_2) & \dots & \Psi(x_K) \end{bmatrix} \quad P_y = \begin{bmatrix} \Psi(y_1) & \Psi(y_2) & \dots & \Psi(y_K) \end{bmatrix},
\]
with $\Psi = [\psi_1,\dots,\psi_N]^T$. The matrix $A_N$ allows to define the discrete-time \emph{lifted linear system}
\begin{align}
    \label{eq:LiftedDyn}
    z[k+1] = A_Nz[k],
\end{align}
where $z[k] = [\psi_1(x(kT_s)),\dots,\psi_{N}(x(kT(s))]^T$ is the lifted trajectory, which approximates the nonlinear continuous-time system \eqref{eq:DS}. A continuous-time counterpart for \eqref{eq:LiftedDyn} can be similarly constructed by approximating the infinitesimal generator $L_N$ over $\fun_N$. In practice, this can be obtained through the matrix logarithm of $A_N$, leading to the continuous-time linear lifted dynamics \cite{MG2020}
\[
\dot{z}(t) = L_Nz(t),
\]
where
\begin{equation}
\label{eq:Inf_Gen}
L_N = \frac{1}{T_s}\log(A_N).
\end{equation}

This data-driven approach has been extended to non-autonomous systems (\ref{eq:DS_ip}) in \cite{Proctor_DMD}, with the snapshots of data $\{x_k,u_k,y_k\}_{k=1}^K$ such that 
\[
y_k = \Phi^{T_s}(x_k,u = u_k),
\]
where the input $u$ is kept constant at $u_k$. Choosing the lifting functions 
\[
\{\psi_1(x),\dots,\psi_{N}(x) \} \cup \{\phi_1(u),\dots,\phi_m(u)\}
\]
where $\phi_i(u) = u_i$ are the lifting functions on the inputs, we obtain the discrete-time linear lifted dynamics
\begin{align}
\label{eq:LiftDyn_ip}
z[k+1] \approx Az[k] + Bu[k],
\end{align}
with the lifted states $z[k] = [\psi_1(x(kT_s)),\dots,\psi_{N}(x(kT_s))]$ and lifted inputs $u[k] = [\phi_1(u(k)),\dots,\phi_m(u(k))]^T$. 

\subsection{System identification using the Koopman operator framework}
\label{sec:syst_ident}

The work \cite{MG2020} presents a framework for system identification through the Koopman operator, which we recall in this subsection for the sake of completion. Assume that the vector field $F$ is of the form 
    \[
    F = \sum_{k=1}^{N_F} w_k h_k,
    \]
    where $h_k : \Rn \to \Rr$ are known dictionary functions and $w_k \in \Rn$ are unknown weighting coefficients. In this case, identifying the map $F$ from data is equivalent to estimating the weights $w_k$. Two approaches to this identification problem have been formulated in the Koopman operator framework, which fundamentally differ in their choice of basis functions. They are described as follows. 
\begin{enumerate}
    \item \textbf{Main method}: The basis functions are chosen to contain $\{h_k(x)\}_{k=1}^{N_f} \cup \{ id_j(x) \}_{j=1}^{n} \}$, where $id_j(x):= x_j$ yields the $j^{th}$ coordinate of $x$. The Koopman semigroup $\Koop^{T_s}$ is approximated by the EDMD method in the subspace spanned by the basis functions as the matrix $A_N$. This matrix is used to compute the matrix approximation $L_N$ of the infinitesimal generator as in (\ref{eq:Inf_Gen}). 
    Finally, the estimated coefficients $w^j_k$ of the vector field component $F_j$ lie in the column of $L_N$ associated with the basis functions $id_j$. Note that the total number $K$ of data pairs must satisfy $K> N$, where $N = N_F+n$. 
    As the total number of dictionary functions increases, more data points are required, making the method not suitable for large dimensional systems.  This motivates the alternative approach, termed as the dual method.

    \item \textbf{Dual Method}: The dual method aims to compute a matrix approximation of the adjoint of the Koopman operator $(\Koop^{T_s})^*$ in a ``sample space'' of dimension $K$. More precisely, the method seeks for the best approximation $A_K$ of the Koopman operator on an $M$-dimensional subspace such that, for any \textit{test functions} $g \in L^2(\Rr^n)$, we have
    \begin{equation}
    \label{eq:dual_AK}
         \begin{bmatrix} \Koop^{T_s} g(x_1) \\ \Koop^{T_s} g(x_2) \\ \vdots \\ \Koop^{T_s} g(x_K) \end{bmatrix}
    \approx A_K \begin{bmatrix} g(x_1) \\ g(x_2) \\ \vdots \\ g(x_K) \end{bmatrix}.
    \end{equation}
    Next, the matrix approximation $L_K$ of the infinitesimal generator is computed as in (\ref{eq:Inf_Gen}).
    This approximation is then used to compute the estimate of the vector field $\hat{F}$ at the data points $\{x_k\}_{k = 1}^K$, which is obtained as
    \begin{equation}
    \label{eq:VF_dual}
    \begin{bmatrix}
        \hat{F}(x_1)^T \\ \hat{F}(x_2)^T \\ \vdots \\ \hat{F}(x_K)^T 
    \end{bmatrix} = L_K \begin{bmatrix}
        x_1^T \\ x_2^T \\ \vdots \\ x_K^T
    \end{bmatrix}.
    \end{equation}
    Finally, the weights $w_k$ are estimated through an independent regression problem. 
\end{enumerate}

In the context of network identification, the amount of data available might be much less than the total number of functions to be identified in the network. To tackle this issue, we will combine both main and dual methods, reconstructing the network topology with a modified formulation of the dual method and subsequently using the main method to identify local dynamics.

\section{Dual lifting method for systems with inputs}

\label{sec:DualMethod}

While the dual identification method was initially developed for autonomous systems, it is extended to non-autonomous systems in this section. This extension is essential in the context of network identification, where external inputs can potentially act on the nodes. In addition, we provide a procedure to choose an appropriate set of test functions that minimizes the error in the estimation of the vector field.

\subsection{Lifting and construction of the Koopman Operator}

We are given samples $\dataip$ generated by the dynamics with a constant input over a sampling period, that is,
\[
y_k =  \Phi^{T_s}(x_k,u_k),
\]
where $\Phi^{T_s}$ is the flow associated with (\ref{eq:DS_ip}). The key idea is to use an augmented dynamics of the form 
\begin{align}
\label{eq:DS_ext_state}
    \dot{\bx} = \begin{bmatrix} \dot{x}(t) \\ \dot{u}(t) \end{bmatrix} = \begin{bmatrix}
        F(x,u) \\ 0
    \end{bmatrix} =: \hF(\bx),
\end{align}
where $\bx = [x,u]^T \in \mcX \subset \Rr^{n+m}$ is the augmented state. Accordingly, we define 
\begin{align}
\label{eq:dataip_concat}
\bx_k = [x_k,u_k]^T, \ \by_k = [y_k,u_k]^T
\end{align}
and it can be seen that $\by_k = \bar{\Phi}^{T_s} \bx_k$ where $\bar{\Phi}^{T_s}$ is the flow associated with the map $\hF$. The Koopman operator associated with this dynamics (\ref{eq:DS_ext_state}) is defined by
\[
\hat{\Koop}^{t} g(\bx) = g \circ \hPhi^t (\bx),
\]
where $g \in L^2(\mcX)$.  
In this case, similarly to \eqref{eq:dual_AK}, the dual method amounts to finding the matrix $\bUk$ that satisfies
\begin{align}
\label{eq:PrimalDual}
\begin{aligned}
    \begin{bmatrix}
        \Uhts g(\bx_1) \\ \vdots \\ \Uhts g(\bx_K) 
    \end{bmatrix} = \bUk \begin{bmatrix} g(\bx_1) \\ \vdots \\ g(\bx_K) \end{bmatrix}.
\end{aligned}
\end{align}
Given the data 
$\{\bar{x}_k,\bar{y}_k\}_{k=1}^K$, we construct the matrices
\begin{align}
\label{eq:liftedMat_Dual}
P_x = \begin{bmatrix} g(\bx_1)^T \\ g(\bx_2)^T \\ \vdots \\ g(\bx_K)^T \end{bmatrix}  \quad \quad P_y = \begin{bmatrix} g(\by_1)^T \\ g(\by_2)^T \\ \vdots \\ g(\by_K)^T 
\end{bmatrix},
\end{align}
where $g(\bx)$ is the vector of test functions $[g_1(\bx),\dots,g_N(\bx)]^T$. The total number $N$ of test functions needs to be equal to or greater than the number $K$ of samples in the data set, since each test functions acts as a data point in the optimization problem given in \eqref{eq:PrimalDual}. Note also that the input $u_k$ appears in both matrices $P_x$ and $P_y$. Next, the matrix $\bUk$ is given by
\begin{align}
\label{eq:UK_computation}
\bUk = P_y P_x^{\dagger}.
\end{align}
and we can finally compute an approximation of the vector field $\hF(\bx)$ at the data points $\{\bx_k\}_{k=1}^{K}$. For any $g \in L^2(\mcX)$, we have
\begin{align*}
\begin{bmatrix}
\hF(\bx_1)\cdot \nabla(g(\bx_1)) \\ \hF(\bx_2) \cdot \nabla(g(\bx_2)) \\ \vdots \\ \hF(\bx_K) \cdot \nabla(g(\bx_K))  
    \end{bmatrix} = \begin{bmatrix}
        L g(\bx_1) \\ L g(\bx_2) \\ \vdots \\ Lg(\bx_L) 
    \end{bmatrix} \approx \tL \begin{bmatrix} g(\bx_1) \\ g(\bx_2) \\ \vdots \\ g(\bx_K)\end{bmatrix},
\end{align*}
where $\tL$ is computed as 
\begin{align}
\label{eq:tL}
\tL  = \frac{1}{T_s} \log(\bUk).
\end{align}
 This equation can be used with $g(x)=x^T$ and, similarly to \eqref{eq:VF_dual}, the approximation of the vector field is obtained as 
\begin{align}
\label{eq:State_approx_dual}
\begin{bmatrix} \hat{F}(x_1,u_1)^T \\ \hat{F}(x_2,u_2)^T \\ \vdots \\ \hat{F}(x_K,u_K)^T   \end{bmatrix}  = \tL \begin{bmatrix} x_1^T \\ x_2^T \\ \vdots \\ x_K^T\end{bmatrix}.
\end{align}
Note that the input data points $u_k$ are not explicitly used in the above equality. In fact, the effect of the inputs is implicitly captured in the matrix $\tL$, and the estimates of the vector field are valid only for the specific state-input pairs $\{x_k,u_k\}_{k=1}^{K}$.

\subsection{Test functions for computing the Koopman Operator}
\label{sec:TestFun}

Unlike the main method, where the lifting functions are selected according to some prior knowledge of the dictionary functions, there is no constraint on the choice of test functions $g$ in the dual method \eqref{eq:PrimalDual}. The set of test functions can therefore be chosen to optimize the quality of the approximation of the Koopman operator. Though this problem can be posed in multiple ways, this section focuses on the following specific formulation. We are given several sets $G_i = \{g_{i1},\dots,g_{iN} \}$ of test functions. Our aim is to select the best set $G_i$ so that the error in the estimation of the vector field $F(x,u)$ is minimum. For each choice of test functions $G_i$, we denote by $A_i$ the corresponding approximation (\ref{eq:UK_computation}) of the Koopman operator and we define the $K\times n$ approximation error matrix
\begin{small}
\begin{align}
\label{eq:e_i}
e_i = \begin{bmatrix} y_1^T \\ \vdots \\ y_K^T \end{bmatrix} - A_i \begin{bmatrix} x_1^T \\ \vdots \\ x_K^T \end{bmatrix}.
\end{align}
\end{small}
Let us also define $L_i = \frac{1}{T_s} \log(A_i)$ and the error
\begin{small}
\begin{align}
\label{eq:te_i}
\te_i = \begin{bmatrix} F(x_1,u_1)^T \\ \vdots \\ F(x_K,u_k)^T \end{bmatrix} - L_i \begin{bmatrix} x_1^T \\ \vdots \\ x_K^T \end{bmatrix}.
\end{align}
\end{small}
The following result characterizes the best choice of the test functions.
\begin{proposition} 
\label{lem:Test-fun}
Given the sets of test functions $G_i$ and the associated errors \eqref{eq:e_i}-\eqref{eq:te_i}, with $i=1,\dots,r$,
\[
\lim_{T_s \to 0} \underset{i \in \{1,\dots,r\}}{\mathrm{argmin}} \|e_i\| = \lim_{T_s \to 0}  \underset{i \in \{1,\dots,r\}}{\mathrm{argmin}} \|\te_i\|.
\]

\end{proposition}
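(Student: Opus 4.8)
The plan is to relate the discrete-time error $e_i$ and the continuous-time error $\te_i$ through a joint Taylor expansion in the sampling time $T_s$, and to show that, after an $i$-independent rescaling, both $\|e_i\|$ and $\|\te_i\|$ converge to the same limit for every index $i$; equality of the argmins then follows from finiteness of the index set. To set up notation, I would collect the data into the matrices
\[
X = \begin{bmatrix} x_1^T \\ \vdots \\ x_K^T \end{bmatrix}, \quad Y = \begin{bmatrix} y_1^T \\ \vdots \\ y_K^T \end{bmatrix}, \quad \mathbf{F} = \begin{bmatrix} F(x_1,u_1)^T \\ \vdots \\ F(x_K,u_K)^T \end{bmatrix},
\]
so that $e_i = Y - A_i X$ and $\te_i = \mathbf{F} - L_i X$. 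Since $y_k = \Phi^{T_s}(x_k,u_k) = x_k + T_s\, F(x_k,u_k) + O(T_s^2)$, I would first record the expansion $Y = X + T_s \mathbf{F} + O(T_s^2)$.

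Next I would expand the Koopman approximation $A_i$. Writing $g(\by_k) = g(\bx_k) + T_s\,(Lg)(\bx_k) + O(T_s^2)$, with $L$ the generator of the augmented flow, the data matrices in \eqref{eq:liftedMat_Dual} satisfy $P_y = P_x + T_s Q + O(T_s^2)$, where $Q$ is the matrix whose rows are $(Lg(\bx_k))^T$. The crucial observation is that both $P_x$ and $Q$ are independent of $T_s$, since only the fixed pairs $\bx_k = [x_k,u_k]^T$ enter them. Hence, using $A_i = P_y P_x^{\dagger}$ together with $P_x P_x^{\dagger} = I_K$ (valid because $N \geq K$ and $P_x$ has full row rank), I obtain $A_i = I_K + T_s M_i + O(T_s^2)$ with the $T_s$-independent matrix $M_i := Q P_x^{\dagger}$. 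Feeding this into the matrix logarithm gives $L_i = \tfrac{1}{T_s}\log(A_i) = M_i + O(T_s)$, so $L_i \to M_i$ as $T_s \to 0$.

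Combining the two expansions then yields
\[
e_i = Y - A_i X = T_s\,(\mathbf{F} - M_i X) + O(T_s^2), \qquad \te_i = \mathbf{F} - L_i X = (\mathbf{F} - M_i X) + O(T_s).
\]
Consequently $\tfrac{1}{T_s}\|e_i\| \to \|\mathbf{F} - M_i X\|$ and $\|\te_i\| \to \|\mathbf{F} - M_i X\|$ for each $i$, i.e. both quantities share the same limit $c_i := \|\mathbf{F} - M_i X\|$. Because multiplying by the positive scalar $1/T_s$ does not change a minimizer, $\mathrm{argmin}_i \|e_i\| = \mathrm{argmin}_i \tfrac{1}{T_s}\|e_i\|$; and since $\{1,\dots,r\}$ is finite with each rescaled objective converging to $c_i$, the minimizer stabilizes to $\mathrm{argmin}_i c_i$ for $T_s$ small enough. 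The identical argument applies to $\|\te_i\|$, so both limiting argmins equal $\mathrm{argmin}_i c_i$, which is the claim.

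The main obstacle I anticipate is the rigorous control of the matrix-logarithm step: one must guarantee that $A_i$ lies in a neighborhood of $I_K$ on which $\log$ is analytic (which is precisely what the expansion $A_i = I_K + T_s M_i + O(T_s^2)$ supplies, once $P_x$ has full row rank so that $P_x P_x^{\dagger} = I_K$), and then verify that the higher-order terms of the logarithm series are genuinely absorbed into $O(T_s)$. A secondary subtlety is the interchange of limit and argmin: the conclusion is cleanest when the limiting cost $c_i$ has a unique minimizer, and the degenerate case of ties would require either a set-valued reading of the argmin or a fixed tie-breaking convention.
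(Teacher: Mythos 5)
Your proof is correct, and it rests on the same first-order-in-$T_s$ Taylor expansion as the paper's, but the mechanics are genuinely different. The paper never expands $A_i$ in terms of the data matrices: it writes $A_i = e^{L_i T_s}$ (directly from $L_i = \tfrac{1}{T_s}\log A_i$), expands the matrix exponential, divides by $T_s$, and transfers the minimality inequality $\|e_{i^*}\|\le\|e_i\|$ into $\|\dbX - L_{i^*}\bX\| \le \|\dbX - L_i\bX\|$ in the limit --- no rank hypothesis and no explicit limit object. You instead expand $P_y = P_x + T_s Q + O(T_s^2)$ through the generator acting on the test functions and obtain $A_i = I_K + T_s M_i + O(T_s^2)$ with $M_i = Q P_x^\dagger$, which requires $P_x P_x^\dagger = I_K$, i.e.\ full row rank of $P_x$ and $N\ge K$; this is consistent with the paper's setup (which demands $N \ge K$) but is an assumption its proof never needs, and your Taylor step also implicitly requires the test functions to be smooth enough. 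What your route buys is rigor precisely where the paper is tacit: you prove that $L_i$ converges as $T_s \to 0$ (to $M_i$), whereas the paper discards ``$O(T_s^2)$'' terms of $e^{L_i T_s}$ that are only $O(T_s^2)$ if $L_i$ stays bounded, and then passes to a limit in an inequality whose two sides still depend on $T_s$ through $L_i$; you also exhibit the common limiting cost $c_i = \|\mathbf{F} - M_i X\|$, which makes the statement's double limit meaningful, and you explicitly flag the argmin/limit interchange and the possibility of ties, which the paper (and indeed its informal statement) glosses over. What the paper's route buys in exchange is brevity and weaker hypotheses.
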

\begin{proof}
Define the data matrices
\[ 
\bY = \begin{bmatrix} y_1^T \\ \vdots \\ y_K^T \end{bmatrix} \quad \bX = \begin{bmatrix} x_1^T \\ \vdots \\ x_K^T \end{bmatrix} \quad \dbX = \begin{bmatrix} F(x_1,u_1)^T \\ \vdots \\ F(x_K,u_k)^T \end{bmatrix},
\]
and suppose that $i^*=\mathrm{argmin}_i \|e_i\|$.
Since $A_{i^*}$ minimizes $\|\bY - A_i \bX\|$, we have
\begin{align*}  
   \| \bY - e^{L_{i^*} T_s} \bX \| & \leq \| \bY - e^{L_i T_s} \bX \| \quad \forall i.  
\end{align*}
Expanding $e^{At}$ and discarding terms of the order $\mio(T_s^2)$, we obtain
\[
   \norm{T_s \bigg( \frac{\bY - \bX}{T_s} - L_{i^*} \bX \bigg)}  \leq \norm{ T_s \bigg( \frac{\bY - \bX}{T_s} - L_i \bX \bigg) }, 
   \]
or equivalently
 \[
  \norm{\bigg( \frac{\bY - \bX}{T_s} - L_{i^*} \bX \bigg)}  \leq \norm{ \bigg( \frac{\bY - \bX}{T_s} - L_i \bX \bigg) }.
 \] 
In the limit $T_s \to 0$, the above equality yields
\[
\| \dbX - L_{i^*} X\| \leq \| \dbX - L_i X\|  \quad \quad \forall i,
\]
so that $i^*=\mathrm{argmin}_i \|\te_i\|$, which concludes the proof.
\qed
\end{proof}
 
 The above result gives a quantitative way to select the best test functions. Given the fact that the dual method approximates the Koopman operator over the data points for arbitrary functions, the work \cite{MG2020} suggests to use Gaussian radial basis functions (RBF) centered around the data points, that is,
\begin{align}
\label{eq:RBF}
 g_k(\bx) =  e^{-\gamma^2\|\bx-\bx_k\|^2} \quad \quad   1\leq k \leq K,
\end{align}
where $\gamma \geq 0$ is a parameter. Given several sets of Gaussian RBFs parameterized by different values $\gamma$, it follows from Proposition \ref{lem:Test-fun} that the set of RBF which minimizes the error in equation \eqref{eq:e_i} is also expected to provide the smallest error in the approximation of the vector field. This is illustrated with the following example.

 \paragraph{Example.}
 Consider the dynamics
   \begin{align*}
       \begin{aligned}
           \dot{x}_1 &= 2x_2 + 3x_3^2 \\
           \dot{x}_2 &=-0.8x_1x_3-2x_1^2x_2 + u \\
           \dot{x}_3 &= -x_2x_3 + x_1u^2.
       \end{aligned}
   \end{align*}
We use $50$ different samples of initial conditions and inputs, uniformly distributed over $[-1,1]^3$ and $ [-1,1]$, respectively, and we generate data pairs with a sampling time $T_s = 0.01$. The dual Koopman operator representation is computed for $20$ different sets of $50$ different Gaussian RBFs centered around the data points, with $\gamma$ varying from $0.0025$ to $0.025$. As shown in Figure \ref{fig:main}, the value $\gamma$ which minimizes the error $e$ in the approximation of the dual Koopman operator (see \eqref{eq:e_i}) also minimizes the error in $\tilde{e}$ in the computation of the vector field (see \eqref{eq:te_i}).

\begin{figure}[ht]
\centering
\includegraphics[scale=.6]{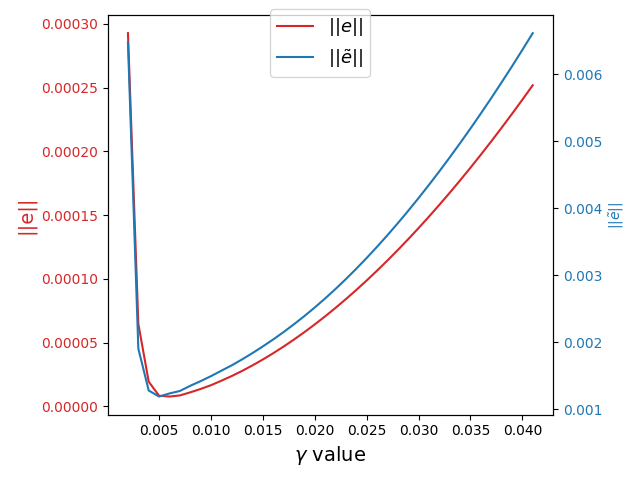}
\caption{According to Proposition \ref{lem:Test-fun}, the errors $e$ and $\tilde{e}$ reach their minimum at the same value of the parameter $\gamma$.}
\label{fig:main}
\end{figure}


\section{Network Identification using Koopman Operator}
\label{sec:NetID}

In this section, we aim to identify a nonlinear networked dynamics through the Koopman operator framework. Specifically, we will develop a two-step procedure where we perform the identification of the topology in the first step and the local identification of the network dynamics in the second step. When the network is sparse, our identification algorithm requires much less data than the total number of parameters to be identified, making it tractable for large networks.

\subsection{Problem formulation}
Consider a network of $N$ nodes and $M$ inputs given by the dynamics
\begin{align}
    \begin{aligned}
    \dxi = F_{i}(x_i) + \sum_{k \in \Mei} G_{ik}(u_k) + \sum_{k \in \Nei} H_{ik}(x_k),\\ 
    \end{aligned}
    \label{eq:NetDyn}
\end{align}
where $x_i \in \Rr^{n_i}$ is the state of the node $i$, $u_{k} \in \Rr^{m_j}$ is an input, $\Mei \subseteq \{1,\dots,M\}$ is the set of inputs at node $i$, and $\Nei \subseteq \{1,\dots,N\}$ is the set of neighbors for the node $i$. The functions $F_{i}: \Rr^{n_i} \to \Rr^{n_i}$, $G_{ik}: \Rr^{m_k} \to \Rr^{n_i}$ and $H_{ik}: \Rr^{n_k} \to \Rr^{n_i}$ describe the internal dynamics at the node $i$, the effect of the inputs on that node, and the coupling with its neighbor nodes, respectively. The network dynamics can be represented by
\begin{align}
\label{eq:NW}
\dot{X} = \begin{bmatrix} \bar{F}_1(X,U) \\ \vdots \\ \bar{F}_N(X,U) \end{bmatrix} = \bar{F}(X,U),
\end{align}
where $X = [x_1,\dots,x_N]^T \in \mcX \subset \Rr^n$ ($n = \sum_i n_i$) is the vector of states, $U = [u_1,\dots,u_N]^T \in \Rr^m$ ($m = \sum_i m_i$) is the vector of inputs, and $\bar{F}_i : \Rr^n \times \Rr^m \to \Rr^{n_i}$ is defined as in (\ref{eq:NetDyn}). Let $\mcu$ be the space of admissible input signals $\bar{U}(\cdot): \Rr_+\to \Rr^m$. 

Starting from the initial condition $X_0 = [x_1(0),\dots,x_N(0)]^T$ and an input signal $\bar{U} \in \mcu$, the solution $X(t)$ at time $t$ associated with the dynamics (\ref{eq:NW}) is defined through the flow as  
\begin{align*}
X(t) = \Phi^{t}(X_0,U = \bar{U}),
\end{align*}
and the Koopman operator associated with the network is defined as 
\begin{align}
\label{eq:Koop_Net}
    \Koop^t \psi(X,\bar{U}) = \psi(\Phi^t(X,U), T^t \bar{U}),
\end{align}
where $\psi \in \mathcal{F}: \Rr^n \times \mcu \to \Cc$, and $T^t$ is a right shift operator defined as 
\[
T^t \bar{U}(\tau) = \bar{U}(t+\tau).
\]
We assume that the functions $F_i$, $G_{ik}$, and $H_{ik}$ are of the form
\begin{align}
    \begin{aligned}
        F_{i}(x_i) &= \sum_{l=1}^{N_i} \alpha_{il} f_{il} (x_i),  \\
        G_{ik}(u_k) &= \sum_{l=1}^{M_k} \beta_{ikl} g_{kl} (u_k) \quad k \in \Mei,\\ 
        H_{ik}(x_k) &= \sum_{l=1}^{R_k} \gamma_{ikl} h_{kl} (x_k) \quad k \in \Nei,
    \end{aligned}
    \label{eq:VF_comp}
    \end{align}
where $f_{il}: \Rr^{n_i} \to \Rr$, $g_{kl}: \Rr^{m_k} \to \Rr$, $h_{kl}: \Rr^{n_k} \to \Rr$ are known dictionary functions and $\alpha_{il},\beta_{ikl}, \gamma_{ikl} \in \Rr^{n_i}$ are the parameters to be estimated. We further assume that the functions $f_{il}$, $g_{kl}$, and $h_{kl}$ belong to the space $L^2$.

For performing the identification, we assume that the data is available as $\{X_k,U_k,Y_k\}_{k=1}^K$ where $X_k = [x_{1k},\dots,x_{Nk}]$ is the initial condition, $Y_k = [Y_{1k},\dots,Y_{Nk}]^T$ is the flow at a known time $T_s$, with the external input $U_k = [u_{1k},\dots,u_{Nk}]^T$ held constant. In this setting, we develop a two-step network identification procedure with the following objectives.
\begin{enumerate}
    \item \textbf{First step - Identification of $\Nei$ and $\Mei$}: For each node, we identify a set of nodes $\bnei$ and the set of inputs $\bmei$ such that, under ideal conditions, these sets correspond to the sets $\Nei$ and $\Mei$, respectively. This is performed using the dual method developed in Section \ref{sec:DualMethod}.
    
    \item \textbf{Second step - Local identification of dynamics}: Once the estimates of the neighbors $\bnei$ and the inputs $\bmei$ are computed, a subset of dictionary functions can be obtained for each node and used to estimate the parameters $\alpha_{il}$, $\beta_{ikl}$ and $\gamma_{ikl}$. This is performed through the main method presented in Section \ref{sec:syst_ident}.
\end{enumerate}

\subsection{First step - Identification of the topology}

The first step of the network identification method consists of (1) estimating the vector field at each sample point and (2) identifying the neighbors $\Nei$ and inputs $\Mei$ acting on the node $i$.
Given the dynamics (\ref{eq:NetDyn}) and the data $\DataNet$, the dual method approach developed in Section \ref{sec:DualMethod} is used to compute an estimate of the time derivatives of $X_k$ by approximating the Koopman operator over the sample space by a suitable choice of test functions. We now proceed to the identification of the neighbors $\Nei$ and inputs $\Mei$ using these estimates of $\dot{X}_k$.

\subsubsection{Choice of node functions}

In order to identify the network topology, a direct approach consists in using the estimates of the vector field $\dot{X}$ and solving a regression problem over the complete set of dictionary functions of the network (see e.g. the Lasso regression problem in \cite{MG2020}). However, when the network has a large number of nodes, this would lead to solving a regression problem over a very large set of dictionary functions, affecting the identification error. Instead, we can solve a smaller regression problem for each node over a set of ``node functions", where the total number of such functions is much smaller than the total number of dictionary functions and still be able to identify all the edges in the network.

For each node $x_k$ and input $u_k$, we define two sets $\{\phi_{kl}\}_{l=1}^d$ and $\{\psi_{kl}\}_{l=1}^d$ of $d$ ``node functions" depending on $x_k$ and $u_k$, i.e. $\phi_{kl} : \Rr^{n_k} \to \Rr$ and $\psi_{kl} : \Rr^{m_k} \to \Rr$. Additionally, these functions are chosen to satisfy
\begin{align}
\label{eq:cond_node_fct_1}
\int_\mcx  \phi_{kl}(x_k) \, dx &= 0 \quad \forall k = 1,\dots,N \quad l=1,\dots,d. \\
\label{eq:cond_node_fct_2}
\int_\mbu \psi_{kl}(u_k) \, du &= 0 \quad  \forall k = 1,\dots,M \quad l=1,\dots,d.
\end{align}
We have the following useful result.
\begin{lemma}
\label{lem:Lem1}
Suppose that the two sets of functions $\{\phi_{kl}\}_{l=1}^d$ and $\{\psi_{kl}\}_{l=1}^d$ satisfy \eqref{eq:cond_node_fct_1} and \eqref{eq:cond_node_fct_2}. Given a function $\tg$ of the form
\[
\tg = \sum_{i = 1}^N \tg_{1i} + \sum_{i=1}^M \tg_{2i},
\]
where $\tg_{1i}$ and $\tg_{2i}$ depend on $x_i$ and $u_i$, respectively, we have

\begin{align*}
  &\tg_{1i} = 0 \implies \langle \tg,\phi_{ij} \rangle = 0 \quad \forall\  j = 1,\dots,d.  \\
  & \tg_{2i} = 0 \implies \langle \tg,\psi_{ij} \rangle = 0 \quad \forall\  j = 1,\dots,d.
\end{align*}

\end{lemma}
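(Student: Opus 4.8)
The plan is to reduce the claim to a single scalar fact about the node functions -- the zero-mean conditions \eqref{eq:cond_node_fct_1}--\eqref{eq:cond_node_fct_2} -- by exploiting that $\tg$ is a sum of single-variable terms while $\phi_{ij}$ (resp. $\psi_{ij}$) depends on only one variable. I work with the $L^2$ inner product over the joint state--input domain $\mcx\times\mbu$, which I assume factors as a product $\mcx=\prod_k\mcx_k$, $\mbu=\prod_k\mbu_k$ of finite-measure domains so that Fubini applies and the integrals separate.

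First I would expand $\langle \tg,\phi_{ij}\rangle$ by bilinearity,
\[
\langle \tg,\phi_{ij}\rangle = \sum_{k=1}^N \langle \tg_{1k},\phi_{ij}\rangle + \sum_{k=1}^M \langle \tg_{2k},\phi_{ij}\rangle,
\]
reducing everything to the ``atomic'' products $\langle \tg_{1k},\phi_{ij}\rangle$ and $\langle \tg_{2k},\phi_{ij}\rangle$. The key observation is that every atomic product, with the single exception of the diagonal term $\langle \tg_{1i},\phi_{ij}\rangle$, pairs two functions depending on disjoint variable groups: for $k\neq i$ the factor $\tg_{1k}$ depends on $x_k$ while $\phi_{ij}$ depends on $x_i$; and each $\tg_{2k}$ depends on an input $u_k$ while $\phi_{ij}$ depends on the state $x_i$.

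Next, for each such cross term I would integrate out $x_i$ first. Since the other factor does not depend on $x_i$, Fubini lets me pull it outside, leaving the inner integral $\int_{\mcx_i}\phi_{ij}(x_i)\,dx_i$, which is proportional to $\int_\mcx \phi_{ij}\,dx$ and hence vanishes by \eqref{eq:cond_node_fct_1}. Thus every cross term is zero, and the only surviving atomic term is the diagonal $\langle \tg_{1i},\phi_{ij}\rangle$, which is zero by the hypothesis $\tg_{1i}=0$. Summing gives $\langle \tg,\phi_{ij}\rangle=0$ for all $j$, establishing the first implication. The second is entirely symmetric: swap the roles of the state and input variables, replace $\phi_{ij}$ by $\psi_{ij}$, use $\tg_{2i}=0$ for the diagonal term and \eqref{eq:cond_node_fct_2} to kill the cross terms.

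The algebra here is routine; the one genuine point requiring care -- and the place where a hidden assumption enters -- is the variable-separation/Fubini step. For $\int_{\mcx_i}\phi_{ij}(x_i)\,dx_i=0$ to follow from \eqref{eq:cond_node_fct_1}, and for integrating out the remaining coordinates to produce a finite constant rather than a divergence, I need $\mcx$ and $\mbu$ to be product domains of finite measure, with each atomic pairing integrable (e.g. each $\tg_{1k},\tg_{2k}$ lying in $L^1\cap L^2$). I would record these as standing assumptions; granted them, the factorization is immediate and the conclusion follows with no further estimates.
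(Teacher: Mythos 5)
Your proof is correct and follows essentially the same route as the paper: expand $\langle \tg,\phi_{ij}\rangle$ by linearity, use Fubini-type separation of variables together with the zero-mean conditions \eqref{eq:cond_node_fct_1}--\eqref{eq:cond_node_fct_2} to annihilate every cross term, and observe that the sole surviving diagonal term $\langle \tg_{1i},\phi_{ij}\rangle$ vanishes by hypothesis (and symmetrically for $\psi_{ij}$). Your explicit recording of the finite-measure product-domain and integrability assumptions makes precise what the paper uses implicitly when it writes the factors $|\mcx|$ and $|\mbu|$, but it is the same argument.
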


\begin{proof} For $i\neq k$, we have
\begin{align*}
    \langle \tg_{1k}, \phi_{ij} \rangle= |\mbu|\int_\mcx \tg_{1k}(x_k) \,dx \, \int_\mcx \phi_{ij}(x_i) \,dx= 0, \\
    \langle \tg_{2k}, \psi_{ij} \rangle= |\mcx| \int_{\mcu} \tg_{2k}(u_k) \,du \, \int_{\mcu} \psi_{ij}(u_i) \,du= 0,
\end{align*}
where $|\mcX|$ and $|\mbu|$ denotes the Lebesgue measure of $\mcX$ and $\mbu$ respectively and we used \eqref{eq:cond_node_fct_1} and \eqref{eq:cond_node_fct_2}.  Moreover, it is clear that
\begin{align*}
   \langle \tg_{1k}, \psi_{ij} \rangle= \int_\mcx \tg_{1k}(x_k) \,dx \, \int_\mbu \psi_{ij}(u_i) \,du = 0,\\
   \langle \tg_{2k}, \phi_{ij} \rangle= \int_\mbu \tg_{2k}(u_k) \,du \, \int_\mcx \phi_{ij}(x_i) \,dx= 0.
\end{align*}
Then, it follows that
\begin{align*}
 \langle \tg, \phi_{ij} \rangle &= \left\langle \sum_{k=1}^N \tg_{1k} + \sum_{k=1}^N \tg_{2k} , \phi_{ij} \right\rangle = \sum_{k=1}^{N} \langle \tg_{1k}, \phi_{ij}\rangle +\sum_{k=1}^{N} \langle \tg_{2k}, \phi_{ij}\rangle = \langle \tg_{1i}, \phi_{ij} \rangle,
\end{align*}
and similarly, we get $\langle \tg, \psi_{ij} \rangle =\langle \tg_{2i}, \psi_{ij} \rangle$. Finally, if $\tg_{1i} = 0$ or $\tg_{2i} = 0$, we obtain the required result.
\qed
\end{proof}

\subsubsection{Estimating the neighbors of a node}
\label{sec:estimate_neighbors}

We can now derive theoretical results that will allow us to identify the neighbors of a node.
Define the subspaces $\mcf_{x,k} = \mbox{span} \{ \phi_{kl} \}_{l=1}^d \subset L^2(\mcx,\mcu)$, $\mcf_{u,k} = \mbox{span} \{ \psi_{kl}\}_{l=1}^d \subset L^2(\mcx,\mcu)$, and $\mcf_{x,u} = \oplus_k \{\mcf_{x,k}\}_{k=1}^N \oplus \{ \mcf_{u,k} \}_{k = 1}^{M}$. We denote the $j^{th}$ component of the function $\bar{F}_i$ as $\bar{F}_i^{(j)}$. For each $i$, we will aim to compute
\begin{align}
\label{eq:MinF_ij}
\begin{aligned}
\tF_{i}^{(j)} = \underset{\tF_{i}^{(j)} \in \mcf_{x,u}}{\mathrm{argmin}} \ \ \norm{\bar{F}_{i}^{(j)} - \tF_{i}^{(j)}}^2 
&=  \int_{\mcx \times \mbu} (\bar{F}_{i}^{(j)}(X,U) - \tilde{F}_{i}^{(j)}(X,U))^2 dU dX. 
\end{aligned}
\end{align}
The function $\tF_{i}^{(j)}$ can be viewed as the orthogonal projection of $\bar{F}_{i}^{(j)}$ on to the subspace $\mcf_{x,u}$.
Thus, it can be written as 
\small{\begin{align}
\label{eq:TF_ij}
\tF_{i}^{(j)}(X,U) = \sum_{k = 1}^{N} \sum_{l = 1}^d a_{ikl}^{(j)} \phi_{kl}(x_k) + \sum_{k = 1}^M \sum_{l=1}^d b_{ikl}^{(j)}\psi_{kl}(u_k) = \sum_{k=1}^N (\tba_{ik}^{(j)})^T \tphi_{k} + \sum_{k=1}^M (\tbb_{ik}^{(j)})^T \tpsi_k,
\end{align}}
where $\tphi_k = [\phi_{k1},\dots,\phi_{kd}]^T$, $\tpsi_k= [\psi_{k1},\dots,\psi_{kd}]^T$, $\tba_{ik}^{(j)} = [a_{ik1}^{(j)},\dots,a_{ikd}^{(j)}]^T$, and $\tbb_{ik} = [b_{ik1}^{(j)},\dots,b_{ikd}^{(j)}]^T$. From the vectors $\tba_{ik}^{(j)}$ and $\tbb_{ik}^{(j)}$, we define
\begin{align}
\label{eq:NeiWeights}
\Lambda_{ik} = \sum_{j=1}^{n_i} \norm{\tba_{ik}^{(j)}}_1 \quad \quad \Delta_{ik} = \sum_{j=1}^{n_i} \norm{\tbb_{ik}^{(j)}}_1,
\end{align} 
where $\norm{ \cdot }_1$ is the $1$-norm. We will use these quantities to determine the existence of an edge between node $k$ and node $i$, and similarly to determine whether input $k$ affects node $i$.

\begin{theorem}

\label{th:neighbors}
Consider a network of $N$ nodes and $M$ inputs with dynamics \eqref{eq:NetDyn}, and functions $\phi_{kl} : \Rr^{n_k} \to \Rr$ and $\psi_{kl} : \Rr^{m_k} \to \Rr$ satisfying \eqref{eq:cond_node_fct_1}-\eqref{eq:cond_node_fct_2}. For each node $i$, the weights $\Lambda_{ik},\ k = 1,\dots,N$, and $\Delta_{ik}, \ k=1, \dots,M$, are computed as given by \eqref{eq:NeiWeights}. Then,
\[F_{i} = 0 \implies \Lambda_{ii} = 0, \quad \quad H_{ik} = 0 \implies \Lambda_{ik} = 0, \quad \quad
    G_{ik} = 0 \implies \Delta_{ik} = 0.\]
Moreover, if the components of $F_{i}$, $H_{ik}$ and $G_{ik}$ are not orthogonal to the spaces $\mcf_{x,i}$, $\mcf_{x,k}$ and $\mcf_{u,k}$, respectively, then 
    \[
    F_{i} = 0 \iff \Lambda_{ii} = 0, \quad \quad H_{ik} = 0 \iff \Lambda_{ik} = 0, \quad \quad G_{ik} = 0 \iff \Delta_{ik} = 0.
    \]
\end{theorem}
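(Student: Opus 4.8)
The plan is to reduce each of the six implications to a single statement about the orthogonal projection of one component function (an internal term $F_i$, a coupling term $H_{ik}$, or an input term $G_{ik}$) onto the corresponding node-function subspace, and then read off the conclusion from the definitions of $\Lambda_{ik}$ and $\Delta_{ik}$ in \eqref{eq:NeiWeights}.

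First I would establish that the subspaces $\{\mcf_{x,k}\}_{k=1}^N$ and $\{\mcf_{u,k}\}_{k=1}^M$ are mutually orthogonal in $L^2(\mcx\times\mbu)$. Indeed, for $k\neq k'$ the inner product $\langle \phi_{kl},\phi_{k'l'}\rangle$ factorizes over the product domain into an expression containing the factor $\int \phi_{kl}(x_k)\,dx_k$, which vanishes by \eqref{eq:cond_node_fct_1}; the same computation handles $\langle \phi_{kl},\psi_{k'l'}\rangle$ and $\langle \psi_{kl},\psi_{k'l'}\rangle$ via \eqref{eq:cond_node_fct_1}--\eqref{eq:cond_node_fct_2}. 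This orthogonality justifies the direct-sum structure of $\mcf_{x,u}$ and, crucially, implies that the orthogonal projection onto $\mcf_{x,u}$ splits as the sum of the projections onto the individual blocks. Hence the coefficient vector $\tba_{ik}^{(j)}$ appearing in \eqref{eq:TF_ij} is exactly the coordinate representation of $\mathrm{Proj}_{\mcf_{x,k}}\bar{F}_i^{(j)}$, and $\tbb_{ik}^{(j)}$ represents $\mathrm{Proj}_{\mcf_{u,k}}\bar{F}_i^{(j)}$. Assuming the node functions $\{\phi_{kl}\}_l$ and $\{\psi_{kl}\}_l$ are linearly independent, a vanishing projection forces the coefficient vector itself to vanish, so $\Lambda_{ik}=0$ iff $\langle \bar{F}_i^{(j)},\phi_{kl}\rangle=0$ for all $j,l$, and analogously for $\Delta_{ik}$ with the $\psi_{kl}$.

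Next I would invoke Lemma \ref{lem:Lem1}. Grouping the right-hand side of \eqref{eq:NetDyn} by the variable each term depends on shows that $\bar{F}_i^{(j)}$ has precisely the additively separated form required by the lemma: its $x_i$-dependent piece is $F_i^{(j)}$, its $x_k$-dependent piece (for $k\neq i$) is $H_{ik}^{(j)}$ with $H_{ik}\equiv 0$ whenever $k\notin\Nei$, and its $u_k$-dependent piece is $G_{ik}^{(j)}$ with $G_{ik}\equiv 0$ whenever $k\notin\Mei$. Lemma \ref{lem:Lem1} then yields $\langle \bar{F}_i^{(j)},\phi_{il}\rangle=\langle F_i^{(j)},\phi_{il}\rangle$, $\langle \bar{F}_i^{(j)},\phi_{kl}\rangle=\langle H_{ik}^{(j)},\phi_{kl}\rangle$ for $k\neq i$, and $\langle \bar{F}_i^{(j)},\psi_{kl}\rangle=\langle G_{ik}^{(j)},\psi_{kl}\rangle$. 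Combined with Step 1 this gives the clean equivalences $\Lambda_{ii}=0\iff F_i^{(j)}\perp\mcf_{x,i}\ \forall j$, $\Lambda_{ik}=0\iff H_{ik}^{(j)}\perp\mcf_{x,k}\ \forall j$ (for $k\neq i$), and $\Delta_{ik}=0\iff G_{ik}^{(j)}\perp\mcf_{u,k}\ \forall j$.

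The three forward implications are then immediate: if, for instance, $F_i=0$ then every component $F_i^{(j)}$ is zero, hence trivially orthogonal to $\mcf_{x,i}$, so $\Lambda_{ii}=0$; the arguments for $H_{ik}=0\Rightarrow\Lambda_{ik}=0$ and $G_{ik}=0\Rightarrow\Delta_{ik}=0$ are identical. For the converse under the non-degeneracy hypothesis I would argue by contrapositive: the hypothesis states precisely that a nonzero $F_i$ (resp.\ $H_{ik}$, $G_{ik}$) has at least one component with nonzero projection onto $\mcf_{x,i}$ (resp.\ $\mcf_{x,k}$, $\mcf_{u,k}$); thus $\Lambda_{ii}=0$ forces every component of $F_i$ to be orthogonal to $\mcf_{x,i}$, which under the hypothesis can happen only if $F_i=0$, and likewise for the other two quantities. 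I expect the main obstacle to be Step 1 — proving mutual orthogonality of the node subspaces and using it to identify $\tba_{ik}^{(j)}$ and $\tbb_{ik}^{(j)}$ with genuine block projections — since everything afterward is bookkeeping built on that fact together with Lemma \ref{lem:Lem1}; one must also be careful there to assume linear independence of the node functions so that a vanishing projection forces the coefficient vectors to vanish.
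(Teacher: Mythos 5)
Your proposal is correct and takes essentially the same route as the paper's proof: both rest on the mutual orthogonality of the subspaces $\mcf_{x,k}$ and $\mcf_{u,k}$ induced by \eqref{eq:cond_node_fct_1}--\eqref{eq:cond_node_fct_2}, identify the coefficient vectors $\tba_{ik}^{(j)}$, $\tbb_{ik}^{(j)}$ with the block projections of $\bar{F}_i^{(j)}$, and invoke Lemma \ref{lem:Lem1} to reduce those projections to the single relevant component before reading off the implications and, under the non-orthogonality hypothesis, the converses. The only cosmetic difference is that the paper tests $\bar{F}_i^{(j)}$ against the projection itself, via the identity $\langle H_{ik}^{(j)}, \tF_{x,ik}^{(j)} \rangle = \|\tF_{x,ik}^{(j)}\|^2$, rather than against the individual basis functions $\phi_{kl}$, and the linear-independence caveat you flag is indeed implicitly assumed by the paper in the step $\tF_{x,ik}^{(j)} = 0 \implies \tba_{ik}^{(j)} = 0^T$.
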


\begin{proof}

Let us define $\tF_{x,ik}^{(j)} = (\tba_{ik}^{(j)})^T \tphi_k \in \fun_{x,k}$ and $\tF_{u,ik}^{(j)} = (\tbb_{ik}^{(j)})^T \tpsi_k \in \fun_{u,k}$, so that we can write
\[
\tF_i^{(j)} = \sum_{k=1}^N \tF_{x,ik}^{(j)} + \sum_{k=1}^M \tF_{u,ik}^{(j)}.
\]
Since $\tF_i^{(j)}$ is computed from an orthogonal projection (\ref{eq:MinF_ij}) and from \eqref{eq:cond_node_fct_1}-\eqref{eq:cond_node_fct_2} it follows that $\fun_{x,k} \perp \fun_{x,l}$, $\fun_{u,k} \perp \fun_{u,l}$ for all $k\neq l$, and $\fun_{x,k} \perp \fun_{u,l}$ for all $k,l$. It appears that $\tF_{x,ik}^{(j)}$ and $\tF_{u,ik}^{(j)}$ are the projections of $\bar{F}_i^{(j)}$ onto the subspaces $\fun_{x,k}$ and $\fun_{u,k}$, respectively. This implies that
\begin{align}
\label{eq:temp2}
\begin{aligned}
0 = \langle \bar{F}_i^{(j)} - \tF_{x,ik}^{(j)}, \tF_{x,ik}^{(j)} \rangle &= \langle \bar{F}_i^{(j)}, \tF_{x,ik}^{^{(j)}} \rangle - \langle\tF_{x,ik}^{(j)}, \tF_{x,ik}^{(j)} \rangle \\
&= \langle \bar{F}_i^{(j)}, \tF_{x,ik}^{^{(j)}} \rangle - \|\tF_{x,ik}^{(j)}\|^2
\quad \quad \forall k = 1,\dots,N.
\end{aligned}
\end{align}

Moreover, it follows from \eqref{eq:NetDyn} and Lemma \ref{lem:Lem1} that
\begin{align*}
    \langle \bar{F}_i^{(j)}, \tF_{x,ik}^{^{(j)}} \rangle = \left\langle \left(F_{i}^{(j)} + \sum_{r \in \Mei} H_{ir}^{(j)} + \sum_{r \in \Mei} G_{ir}^{(j)}\right), \tF_{x,ik}^{(j)} \right \rangle =\langle H_{ik}^{(j)}, \tF_{x,ik}^{(j)}  \rangle,
\end{align*}
and with \eqref{eq:temp2}, we obtain
\begin{align}
    \label{eq:temp3}
    \langle H_{ik}^{(j)}, \tF_{x,ik}^{(j)}  \rangle =  \|\tF_{x,ik}^{(j)}\|^2.
\end{align}
We now prove the statements of the theorem. It follows from \eqref{eq:temp3} that 
\begin{align*}
H_{ik} = 0 \implies \langle H_{ik}^{(j)}, \tF_{x,ik}^{(j)}  \rangle = 0 \,\,  \forall j \implies \tF_{x,ik}^{(j)} = 0 \,\,  \forall j
\implies \tba_{ik}^{(j)} = 0^T \,\,  \forall j
\implies \Lambda_{ik} = 0.
\end{align*}
Additionally, if the components of $H_{ik}$ are not orthogonal to $\mcf_{x,k}$,
\begin{align*}
H_{ik} \neq 0 \implies \exists\ j \  \textrm{s.t.} \ \langle H_{ik}^{(j)}, \tF_{x,ik}^{(j)} \rangle \neq 0 
\implies \tF_{ik}^{(j)} \neq 0 
\implies \tba_{ik}^{(j)} \neq 0^T
\implies \Lambda_{ik} \neq 0.
\end{align*}
The proof for the components $F_{i}$ and $G_{ik}$ follows on similar lines and is omitted.
\qed
\end{proof}
Theorem \ref{th:neighbors} shows that we can use orthogonal projections and node functions to identify the neighbors of a node. In practice, we will estimate $\Lambda_{ik}$ and $\Delta_{ik}$ (see \eqref{eq:NeiWeights}) from the data and the estimates $\hat{F}_i(X,U)$ of the vector field. Given a choice of node functions $\phi_{il}, \psi_{il}$, we define 
\[
p_i(x_i) = [\phi_{i1}(x_i),\dots, \phi_{id}(x_i)]^T \quad \quad q_i(x_i) = [\psi_{i1}(u_i),\dots,\psi_{id}(u_i)]^T 
\]
and 
\[
p(X,U) = [p_1(x_1)^T, \dots, p_N(x_N)^T, q_1(u_1)^T, \dots, q_M(u_M)^T ]^T,
\]
and we construct the matrix $\HXU$ 
\begin{equation}
\label{eq:Hxu-network}
\HXU = \begin{bmatrix} p(X_1,U_1)^T \\ p(X_2,U_2)^T \\ \vdots \\ p(X_K,U_K)^T \end{bmatrix}.
\end{equation}
From the estimates of $\hat{F}_{i}^{(j)}$, we set up the optimization problem 
\[
{}^{argmin}_{\ \ \ \hat{\xi}_{i}^{(j)}} \norm{ \begin{bmatrix} \hat{F}_{i}^{(j)}(X_1,U_1) \\ \vdots \\ \hat{F}_{i}^{(j)}(X_K,U_K) \end{bmatrix} - \HXU \hat{\xi}_{i}^{(j)} }_2^2,
\]
where $\hat{\xi}_{i}^{(j)} \in \Rr^{(N+M)d}$ is the optimization variable $\hat{\xi}_{i}^{(j)} = [\tba_{i1}^{(j)}\, \cdots  \, \tba_{iN}^{(j)} \,  \tbb_{i1}^{(j)} \, \cdots \,  \tbb_{iM}^{(j)}]^T$.
If $K < (N+M)d$ or if we need to promote sparsity of the network, we can regularize the problem by adding a penalty term and solving the following Lasso regression problem 
\begin{align}
\label{eq:LassoNet}
{}^{argmin}_{\ \ \ \hat{\xi}_{i}^{(j)}} \norm{ \begin{bmatrix} \hat{F}_{i}^{(j)}(X_1,U_1) \\ \vdots \\ \hat{F}_{i}^{(j)}(X_K,U_K) \end{bmatrix} - \HXU \hat{\xi}_{i}^{(j)} }_2^2 + \rho \norm{ \hat{\xi}_{i}^{(j)}}_1.
\end{align}
Finally, the estimates of $\Lambda_{ik}$ and $\Delta_{ik}$ are computed from $\tba_{ik}^{(j)}$ and $\tbb_{ik}^{(j)}$ respectively by using (\ref{eq:NeiWeights}).
In practice, a threshold $\delta$ can be considered on $\Lambda_{ik}$ and $\Delta_{ik}$. In this case, the set of neighbors $\bar{\Nei}$ and the set of inputs $\bar{\Mei}$ are computed as
\begin{equation}
\label{eq:estim_neighbors}
\bar{\Nei} = \{k | \Lambda_{ik} \geq \delta\} \quad \quad \bar{\Mei} = \{k | \Delta_{ik} \geq \delta \}.
\end{equation}

\subsection{Second step - local identification of the dynamics}

In the first step of the method, we have obtained the estimates $\bnei$ and $\bmei$ of the sets of neighbors and inputs, respectively, for each node. In the second step, we estimate the functions $F_{i}, G_{ik}, H_{ik}$, assuming that they have the form (\ref{eq:VF_comp}) and depend on states and inputs related to the estimates $\bnei$ and $\bmei$. For this local identification, we compute a finite-dimensional ``local" approximation of the Koopman operator $\Koop^{T_s}$ from the data and use it to determine the parameters $\alpha_{il}$, $\beta_{kil}$ and $\gamma_{kil}$. For a given node $i$, the local identification involves two steps, namely:
\begin{enumerate}
    \item Lifting the data and computing the restricted Koopman operator over a finite-dimensional subspace of $\mcf$.
    \item Constructing an approximate continuous-time linear lifted dynamics and recovering the parameters. 
\end{enumerate}

\subsubsection{Lifting of the data and approximation of the Koopman operator}

For each node $i$, we define the neighbor state and input vectors

\begin{align}
\label{eq:nei_state_ip}
\mathbf{x}_i = \begin{bmatrix} x_{i_1} \\ x_{i_2} \\ \vdots \\ x_{i_{s_i}}\end{bmatrix} \quad \quad i_k \in \bnei
\qquad
\mathbf{u}_i = \begin{bmatrix} u_{i_1} \\ u_{i_2} \\  \vdots \\ u_{i_{t_i}} \end{bmatrix} \quad \quad i_k \in \bmei.
\end{align}
Let $s_i = |\bnei|$ and $t_i = |\bmei|$.
From \eqref{eq:VF_comp} define the dictionary functions vectors as
\begin{align}
 \label{eq:S2DictFun}
 \begin{aligned}
     p(x_i) &= [f_{i1}(x_i),\dots,f_{iN_i}(x_i)], \\
     q_k(x_k) &= [h_{k1}(x_k),\dots, h_{kR_{k}}(x_k)] \quad \quad \forall \  k \in \bnei, \\
     \bq_i(\mathbf{x}_i) &= [q_{i_1}(x_{i_1}), \dots, q_{i_{s_i}}(x_{i_{s_i}})], \\
     r_{k}(u_k) &= [g_{k1}(u_k),\dots, g_{kM_{k}}(u_k)] \quad \quad \forall \ k \in \bmei,\\
     \br_i(\mathbf{u}_i) &= [r_{i_1}(u_{i_1}), \dots, r_{i_{t_i}}(u_{i_{t_i}})].
 \end{aligned}
\end{align}
Let $\bar{R}_{i} = \sum_{i_k \in \bnei} R_{i_k}$ and $\bar{M}_i = \sum_{i_k \in \bmei} M_{i_k}$.
Using the dictionary functions, we define the subspaces
 \begin{align*}
 \mcf_{x_i} &= \mbox{span}\{f_{il}\}, \\ 
\mcf_{\mathbf{x}_i} &= \mathop{\oplus}_{\substack{k \in \bnei}} \{ h_{kl} \},\\
\mcf_{\mathbf{u}_i} &= \mathop{\oplus}_{\substack{k \in \bmei}}  \{ g_{kl}\},
\end{align*}
and
\begin{align}
\label{eq:BasFun_loc}
\mcf_i := \mcf_{x_i} \oplus \mcf_{\mathbf{x}_i} \oplus \mcf_{\mathbf{u}_i} \subset \mcf.
\end{align}
In the context of local identification, we will approximate the operator
\begin{align*}
    \Koop^{t}_i : \mcf_{x_i} \to \mcf_i, \qquad \Koop_i^{t} = \mbp_i \Koop^{t}|_{\mcf_{x_i}},
\end{align*}
where $\mbp_i$ is the $\mathcal{L}^2$ orthogonal projection $\mbp_i : \mcf \to \mcf_i$. 
Given $f \in \mcf_{x_i}$, we aim to compute $\bar{f} \in \mcf_{x_i}$, $\bar{h} \in \mcf_{\mathbf{x}_i}$, $\bar{g} \in \mcf_{\mathbf{u}_i}$ such that 
\[
{}^{\ min\ }_{\bar{f},\bar{h}, \bar{g}} \norm{\Koop_i^{T_s}f - \bar{f} - \bar{h} - \bar{g}}_2^2.
\]
Given the data $\DataNet$, we can reformulate the above problem as 
\begin{align}
\label{eq:OptLocal}
\min_{\bar{f},\bar{h}, \bar{g}} \frac{1}{K} \sum_{k = 1}^{K} \norm{\Koop_i^{T_s} f(x_{i,k}) - \bar{f}(x_{i,k}) - \bar{h}(\mathbf{x}_{i,k}) - \bar{g}(\mathbf{u}_{i,k})}_2^2,
\end{align}
where $x_{i,k} \in \Rr^{n_i}$ is the measurement of the $k^{th}$ sample at the node $x_i$. Similarly, $\mathbf{x}_{i,k}$ and $\mathbf{u}_{i,k}$ are the $k^{th}$ samples of the neighbor states and inputs of the node $i$, as in equation \eqref{eq:nei_state_ip}. Defining the matrices 
\begin{align}
\label{eq:LocLiftedMat_Main}
P_{x_i} = \begin{bmatrix} p_i(x_{i,1}) \\ p_i(x_{i,2})\\ \vdots \\ p_i(x_{i,K}) \end{bmatrix}, \ P_{y_i} = \begin{bmatrix} p_i(y_{i,1}) \\ p_i(y_{i,2}) \\ \vdots \\ p_i(y_{i,K}) \end{bmatrix}, \
P_{\mathbf{x}_i} = \begin{bmatrix} \bq_i(\mathbf{x}_{i,1}) \\ \bq_i(\mathbf{x}_{i,2}) \\ \vdots \\ \bq_i(\mathbf{x}_{i,K}) \end{bmatrix}, \
P_{\mathbf{u}_i} = \begin{bmatrix} \br_i(\mathbf{u}_{i,1}) \\ \br_i(\mathbf{u}_{i,2})\\ \vdots \\ \br_i(\mathbf{u}_{i,K}) \end{bmatrix},
\end{align}
we rewrite the optimization problem (\ref{eq:OptLocal}) as 
\[
\min_{\bar{A}_i,\bar{B}_i,\bar{E}_i} \norm{P_{y_i} - P_{x_i} \bar{A}_i^T - P_{\mathbf{x}_i} \bar{E}_i^T - P_{\mathbf{u}_i} \bar{B}_i^T}_2^2,
\]
whose solution is given by
\[
\begin{bmatrix} \bar{A}_i^T \\ \bar{E}_i^T \\ \bar{B}_i^T \end{bmatrix} = [P_{x_i}\ P_{\mathbf{x}_i}\ P_{\mathbf{u}_i}]^{\dagger} P_{y_i}. 
\]
This yields the following discrete-time lifted input-state dynamics
\begin{align}
    \label{eq:LocLiftDyn_DT}
    z_i[t+1] \approx \bA_i z_i[t] + \bE_i w_i[t] + \bB_i {v}_i[t],\qquad t \in \mathbb{Z},
\end{align}
with states $z_i[t] = p_i(x_i(tT_s)) \in \Rr^{N_i}$, ${v}_i[t] = \br_i(\mathbf{u}_i(tT_s))  \in \Rr^{\bar{M}_i}$, and $w_i[t] = \bq_i(\mathbf{x}_i(tT_s)) \in \Rr^{\bar{R}_i}$. Note that the lifted states $w_i$ of the neighboring nodes can be interpreted as external inputs at the node $i$.

\subsubsection{continuous-time local lifted dynamics and estimation of parameters}

Our aim is to identify the matrices $A_i$, $E_i$ and $B_i$ so that the continuous-time dynamics 
\begin{align}
\label{eq:LocLiftDyn_CT}
\dot{z}_i = A_iz_i + E_iw_i + B_i {v}_i,
\end{align}
matches the discrete-time dynamics (\ref{eq:LocLiftDyn_DT}) at sampling time $T_s$, with the same initial condition $z_i(0)$ and the ``external" inputs ${v}_i$, $w_i$ held constant at ${v}_i(0)$ and $w_i(0)$ over the sampling interval $[0,T_s]$. The solution to the dynamics (\ref{eq:LocLiftDyn_CT}) at time $t = T_s$ is given by
\begin{align}
\label{eq:dummy3}
z_i(T_s) = e^{A_iT_s}z_i(0) + \bigg(\int_0^{T_s} e^{A_i(T_s-\tau)}E_i d\tau \bigg) w_i(0) + \bigg(\int_0^{T_s} e^{A_i(T_s-\tau)}B_i d\tau \bigg){v}_i(0).   
\end{align}
Comparing (\ref{eq:LocLiftDyn_DT}) and (\ref{eq:dummy3}), we obtain
\begin{align}
\label{eq:A_mat}
A_i = \frac{1}{T_s} \log(\bA_i).
\end{align}
Moreover, if $A_i$ is invertible, we have
\begin{align*}
\bE_i &= \int_{0}^{T_s} e^{A_i(T_s-\tau)} E_i d\tau = (e^{A_iT_s} - I) A_i^{-1} E_i, \\
\bB_i&= \int_{0}^{T_s} e^{A_i(T_s-\tau)} B_i d\tau = (e^{A_iT_s} - I) A_i^{-1} B_i,
\end{align*}
and
\begin{align}
\label{eq:B_mat}
E_i = A_i(\bA_i-I)^{-1} \bE_i \quad \quad B_i = A_i(\bA_i-I)^{-1} \bB_i.
\end{align}

\subsubsection*{Estimation of parameters}
Decomposing $E_i$ and $B_i$ as
\begin{align}
\label{eq:DecompG}
\begin{aligned}
E_i &= [ E_{i_1} \ E_{i_2} \ \dots \ E_{i_{s_i}}], \\
B_i & = [ B_{i_1} \ B_{i_2} \ \dots \ B_{i_{t_i}}],
\end{aligned}
\end{align}
we have the continuous-time dynamics (\ref{eq:LocLiftDyn_CT}) written as 
\begin{align}
    \dot{z}_i = A_i z_i + \sum_{{i_k} \in \bnei} E_{i_k} w_{i_k} + \sum_{{i_k} \in \bmei} B_{i_k}  v_{i_k},
\label{eq:NetLocDyn}
\end{align}
where $w_{i_k}$ is the lifted state from the neighbor ${i_k}$ defined as $w_{i_k} = q_{i_k}(x_{i_k})$, and $v_{i_k}$ is the lifted input corresponding to the input ${i_k}$ defined as $v_{i_k} = r_{i_k}(u_{i_k})$. Without loss of generality, we assume that the first $n_i$ basis functions $f_{il}$ are the identity functions. The parameters $\alpha_{il}$, $\beta_{ikl}$ and $\gamma_{ikl}$ defined in (\ref{eq:VF_comp}) are then estimated by
\begin{align}
\label{eq:Weights_local}
\begin{aligned}
\alpha_{il} &= A_i[1:n_i,l] \quad 1 \leq l \leq N_i, \\
\beta_{ikl} &= B_{i_k}[1:n_i,l] \quad  1 \leq l \leq M_{i_k},\\
\gamma_{ikl} &= E_{i_k}[1:n_i,l] \quad 1 \leq l \leq P_{i_k},
\end{aligned}
\end{align}
where the vectors $A_i[1:n_i,l]$, $B_{i_k}[1:n_i,l]$, and $E_{i_k}[1:n_i,l]$ contain the first $n_i$ components of the $l^{th}$ column of the matrices $A_i$, $B_{i_k}$, and $E_{i_k}$ respectively.

\subsubsection{Modular Koopman operator for networks} 

As a by product, we can also construct the Koopman operator approximation associated with the complete networked dynamics in a modular form. Consider the vector of all functions acting on $x_i$ and the vector of all functions acting on the input $u_k$, that is,
\begin{eqnarray*}
z_i & = & [f_{i1}(x_i),\dots,f_{iN_i}(x_i), h_{i1}(x_i), \dots, h_{iR_i}(x_i)]^T \in \mathbb{R}^{N_i+R_i}, \\
v_i & = & [g_{i1}(u_i),\dots,g_{iM_i}(u_i)]^T \in \mathbb{R}^{M_i},
\end{eqnarray*}
so that we can construct
\[\textbf{z} = [z_1,\dots,z_N]^T  \quad \quad \textbf{v} = [v_1,\dots,v_M]^T, 
\]
as the lifted states and inputs of the entire network. Given that there is a unique lifted state $z_i$ for each node $x_i$ and that the individual local lifted dynamics of each node is given by \eqref{eq:NetLocDyn}, the lifted dynamics of the entire network can be constructed as
\begin{align} 
\label{eq:EDMD_Network}
\dot{\textbf{z}} = \textbf{A} \textbf{z} + \textbf{B} \textbf{v},
\end{align}
with the matrices
\begin{align*}
\textbf{A} = \begin{bmatrix} A_1 & E_{1,2} & \dots & E_{1,N} \\ E_{2,1} & A_2 & \dots & E_{2,N} \\ \vdots & \vdots & \ddots & \vdots \\ E_{N,1} & E_{N,2} & \dots & A_N \end{bmatrix} &\quad \quad 
\textbf{B} = \begin{bmatrix} B_{1,1} & B_{1,2} & \dots & B_{1,M} \\ B_{2,1} & B_{2,2} & \dots & B_{2,M} \\ \vdots & \vdots & \ddots & \vdots \\ B_{N,1} & B_{N,2} & \dots & B_{N,M}\end{bmatrix}.
\end{align*}
The matrix blocks $E_{i,i_k}$ and $B_{i,i_k}$ are given by
\begin{align*}
E_{i,i_k} = \left\{ \begin{matrix} E_{i_k} & i_k \in \bnei \\ 0 & \mbox{otherwise} \end{matrix} \right. \hspace{.3in} B_{i,i_k} = \left\{ \begin{matrix} B_{i_k} & i_k \in \bmei \\ 0 & \mbox{otherwise} \end{matrix} \right. .
\end{align*}
We note that the data required to construct this modular approximation is much less than the total dimension of the linear lifted network, which is equal to  $\sum_{i=1}^N N_i + R_i$. This provides an efficient and scalable approximation of the Koopman operator for the entire network.

\subsection{Algorithm for Koopman operator based network identification}
The two steps of the method are summarized in Algorithm \ref{alg:first_step} and \ref{alg:sec_step}, respectively. The entire two-step identification method is given in Algorithm \ref{alg:two_steps}.
\begin{algorithm}
\caption{First step: Estimation of Neighbors and Inputs Sets}
\label{alg:first_step}
\begin{algorithmic}[1] 
 \renewcommand{\algorithmicrequire}{\textbf{Input:}}
 \renewcommand{\algorithmicensure}{\textbf{Output:}}
 \REQUIRE Data samples $\DataNet$ and their time derivatives $\dot{X}_k$ computed through the dual method, node functions $\phi_{kl}$ and $\psi_{kl}$, threshold $\delta$. \ENSURE The estimates $\bnei$ and $\bmei$ of the sets $\Nei$ and $\Mei$.
\STATE Using the node functions $\phi_{kl}$ and $\psi_{kl}$, construct the matrix $H_{X,U}$ as in equation \eqref{eq:Hxu-network}. 
\STATE Solve the Lasso problem defined in \eqref{eq:LassoNet} and estimate $\hat{\xi}_i^{(j)}$ for each $i=1,\dots,N$ and $j=1,\dots,n_i$.
\STATE Estimate the weights $\Lambda_{ik}$ and $\Delta_{ik}$ as defined in \eqref{eq:NeiWeights}.
\STATE The estimates of $\Nei$ and $\Mei$ are computed as $\bnei$ and $\bmei$ using the threshold $\delta$ as in equation \eqref{eq:estim_neighbors}.
\end{algorithmic}
\label{alg:Firststep}
 \end{algorithm}

\begin{algorithm}
\caption{Second step: Identification of Local Node Dynamics}
\label{alg:sec_step}
\begin{algorithmic}[1] 
 \renewcommand{\algorithmicrequire}{\textbf{Input:}}
 \renewcommand{\algorithmicensure}{\textbf{Output:}}
 \REQUIRE Data samples $\DataNetAug$, sampling time $T_s$, the estimated neighbor set $\bnei$, input set $\bmei$, and the set of functions $f_{il}(x_i)$, $g_{kl}(u_k)$ for $k \in \bmei$, $h_{kl}(x_k)$ for $k \in \bnei$. \ENSURE The parameters $\alpha_{il}$,$\beta_{kil}$ and $\gamma_{kil}$ corresponding to equation (\ref{eq:VF_comp}).
%
\STATE Define $\textbf{x}_i$ and $\textbf{u}_i$ as 
\[
\textbf{x}_i = \begin{bmatrix} x_{j_1} \\ x_{j_2} \\ \vdots \\ x_{j_{r_1}}\end{bmatrix} \quad \quad j_l \in \bnei, \ r_1 = |\bnei|, \quad \quad \textbf{u}_i = \begin{bmatrix} u_{j_1} \\ u_{j_2} \\ \vdots \\ u_{j_{r_2}}\end{bmatrix} \quad \quad j_l \in \bmei,\ r_2 = |\bmei|.
\]
\STATE Construct the matrices $P_{x,i}$,$P_{y,i}$, $P_{w,i}$ and $P_{v,i}$ as in equation (\ref{eq:LocLiftedMat_Main}).
\STATE Compute $\bA_i$, $\bE_i$ and $\bB_i$ as in equation (\ref{eq:LocLiftDyn_DT}).
\STATE Construct the matrix $A_i$,$E_i$ and $B_i$ corresponding to the continuous-time lifted dynamics (\ref{eq:LocLiftDyn_CT}) using the equations (\ref{eq:A_mat}) and (\ref{eq:B_mat}). 
\STATE Decompose the matrices $E_i$ and $B_i$ as in equation (\ref{eq:DecompG}) and construct matrices $E_{i_k}$, $i_k \in \bnei$ and $B_{i_k}$, $i_k \in \bmei$. 
\STATE The parameters $\alpha_{il}, \beta_{kil}$ and $\gamma_{kil}$ are computed as in equation (\ref{eq:Weights_local}). 
\end{algorithmic}
\label{alg:Local_ip}
 \end{algorithm}

\begin{algorithm}
\caption{Two-Step Network Identification}
\label{alg:two_steps}
\begin{algorithmic}[1] 
 \renewcommand{\algorithmicrequire}{\textbf{Input:}}
 \renewcommand{\algorithmicensure}{\textbf{Output:}}
 \REQUIRE Data samples $\DataNet$, sampling time $T_s$, set of functions $f_{il}$, $g_{kl}$, $h_{kl}$ according to equation (\ref{eq:VF_comp}), node functions $\phi_{kl}$ and $\psi_{kl}$ and threshold $\delta$. 
 \ENSURE The parameters $\alpha_{il}$,$\beta_{il}$ and $\gamma_{jil}$ corresponding to equation (\ref{eq:VF_comp}) for each node $i$.
%
\STATE From the data, compute the approximation of the Koopman operator over the ``sample space" by using equation (\ref{eq:State_approx_dual}). 
\STATE Compute the estimates of the vector field $\dot{X}_k$ at each sample point $(X_k,U_k)$. 
\STATE Use Algorithm \ref{alg:Firststep} to estimate $\bnei$ and $\bmei$ for each $i=1,\dots,N$.
\FOR{i = 1,\dots,N}
\STATE Estimate the parameters $\alpha_{il}$, $\beta_{kil}$ and $\gamma_{kil}$ using Algorithm \ref{alg:Local_ip}. 
\ENDFOR
\end{algorithmic}
\label{alg:Twostep}
\end{algorithm}

\section{Numerical Examples}
\label{sec:NE}

In this section, we provide numerical examples to support the framework developed in this paper. In particular, we investigate the performance of the  identification algorithm in three cases:
\begin{enumerate}
    \item \textbf{Dynamic network defined on an Erdos-Renyi graph with polynomial coupling functions:} We study the identification error with respect to network size, sparsity, number of measurements and intensity of measurement noise. Our results are compared  with the dual method proposed in \cite{MG2020}. We also reconstruct the Boolean network of interactions, which is obtained through the first step of our method.
       
    \item \textbf{Sparse network with non-polynomial coupling functions.} We study the identification error with respect to the sampling time $T_s$ in data-generation. Additionally, we focus on the local identification of a random node to illustrate the accuracy of the proposed framework. 
    \item \textbf{Bio-inspired network with non-scalar local dynamics (Hindmarsh-Rose neuron) and non-polynomial coupling functions.} We investigate the effect of the threshold $\delta$ on the Boolean reconstruction of the network and the accuracy of the method in the local identification.
\end{enumerate} 

The following error metrics will be used to evaluate the identification performance:
\begin{itemize}
       \item \textbf{Local Error $\epsilon_i$} at each node $i$ defined as 
       \begin{align}
       \label{eq:local_error_epsilon}
       \epsilon_i^2 = \sum_{k \in \Nei} \sum_{l} (\beta_{kil} - \hat{\beta}_{kil})^2 + \sum_{k \in \Mei} \sum_{l} (\gamma_{kil} - \hat{\gamma}_{kil})^2.
    \end{align}
    \item \textbf{Root Mean Square Error (RMSE)} which provides an overall performance of the identification method and defined as 
    \[
    RMSE = \sqrt{\frac{1}{N} \sum_{i=1}^{N} {\epsilon_i}^2}.
    \]
    \item \textbf{Maximum Error (ME)} defined as 
    \[
    ME = {}^{\ \ \ max\ \ \ }_{i = \{1,\dots,N\}}\ \  \epsilon_i,
    \]
    which quantifies the worst case performance in the local identification of a node.
\end{itemize}

Apart from the above metrics, we also consider the True Positive Rate (TPR) and False Positive Rate (FPR) in the identification of the Boolean network of interactions, which are defined as follows:

   \begin{align*}
    \mbox{TPR} &=\frac{\mbox{Number of correctly identified edges}}{\mbox{Number of edges}}, \\
    \mbox{FPR} &= \frac{\mbox{Number of falsely identified edges}}{\mbox{Number of non existing edges}},
\end{align*}
 where the number of non-existing edges is the difference between the number of edges in a complete graph of $N$ nodes and the total number of edges in the network. We will focus on the Receiver Operating Characteristic (ROC) curve plotting the True Positive Rate (TPR) against the False Positive Rate (FPR) as the threshold $\delta$ is varied (see Section \ref{sec:estimate_neighbors}). In particular, the area under the ROC curve (AUROC) is a metric which can be used to estimate the effectiveness of the binary identification.

\subsection{Erdos-Renyi network}

We consider a directed Erdos-Renyi network of $N$ nodes, with probability $\rho$ for the existence of an edge between any pair of nodes. The total number of external inputs is assumed to be two. The network dynamics is given by
\[
\dot{x}_i = \sum_{k \in \Nei} \big( \beta_{ki1} x_k + \beta_{ki2} x_k^2 + \beta_{ki3} x_k^3 \big) + \sum_{k \in \Mei} \big( \gamma_{ki1} u_k + \gamma_{ki2} u_k^2 \big).
\]
The functions $H_{ik}$ are monomials of degree uniformly randomly distributed over the set $\{1,2,3\}$ and the input functions $G_{ik}$ are also monomials of degree uniformly distributed over the set $\{1,2\}$, which implies that one of $\beta_{ki1},\beta_{ki2},\beta_{ki3}$ is non-zero for $k\in \Nei$ and one of $\gamma_{ki1},\gamma_{ki2}$ is non-zero for $k \in \Mei$. The data is generated for $K$ trials with initial conditions $X_k \in \Rr^N$ and inputs $U_k \in \Rr^2$ uniformly randomly distributed over the set $[-1,1]^{N}$ and $[-1,1]^2$, and the values $Y_k$ are generated with the sampling time $T_s = 0.1$. Additionally, two independent white Gaussian noise vectors $v_1,v_2\ \ \in \Rr^N$ with zero mean and variance $\sigma$ are added to the data $X_k$ and $Y_k$. 

We perform the first step of the method, computing an estimate of the vector field through the dual identification method, with $2K$ Gaussian RBF centered around the data points $\bar{X}_k,\bar{Y}_k$, with $\gamma=0.001$. Next, the estimated sets of neighbors $\bar{\Nei}$ and inputs $\bar{\Mei}$ are computed with the node functions $\cup_{i=1}^{N}\{x_i,x_i^2\}$ and $\cup_{i=1}^2 \{u_i,u_i^2\}$ and with a threshold $\delta$ (see \eqref{eq:estim_neighbors}). Finally, local identification is performed with the main method for each $i$ in order to recover the functions $H_{ik}$ and $G_{ik}$. The local dictionary functions are taken to be monomials up to degree $4$ in $x_i$ (i.e. $\{x_i,x_i^2,x_i^3,x_i^4\}$) and up to degree $2$ in $u_i$ (i.e. $\{u_i,u_i^2\}$).

For the Erdos-Renyi graph, we consider the following experiments to illustrate the proposed two-step network identification.
\begin{enumerate}
    \item \textbf{Varying the number of data points $K$:} As expected, the identification error decreases as the number of data point increases (Figure \ref{fig:Effect_K}).
    \item \textbf{Varying the size of the network $N$ (scalability):} In this experiment, we fix $K = 2N$. Our two-step method scales up better with network size with consistent performance on both RSME and ME, compared with the dual method (Figure \ref{fig:Effect_N}). 
    \item \textbf{Varying the variance of measurement noise $\sigma$:}

    We observe that the two-step method has lower error consistently over the dual method for different noise variances (Figure \ref{fig:Effect_sigma}). Also, as the noise variance decreases, the two-step method shows an improvement in identification error while the dual method has consistently  large errors even for a small noise variance.
    \item \textbf{Varying the probability $\rho$ of an edge between two nodes (sparsity):} Given the same amount of data, we can see that the identification error increases as sparsity decreases, but the increase is steeper with the dual method for low sparsity (Figure \ref{fig:Effect_rho}).
    \item \textbf{Reconstruction of the Boolean network:} The network reconstruction is obtained with the first step of the method, by using the predicted neighbor set $\bnei$ obtained with a threshold $\delta$ (defined in equation \eqref{eq:estim_neighbors}). The results are excellent for low sparsity of the network. Moreover the AUC decreases as the sparsity decreases (see Figure \ref{fig:AUC}(a) for $\rho = 0.25,0.275,0.3$ and $K = 250$). A possible solution to counter this effect is to increase the number of data samples $K$ (see  Figure \ref{fig:AUC}(a) for $\rho = 0.3$ and $K = 350$). Figure \ref{fig:AUC}(b) shows the comparison of ROC curve for the two-step and dual method with the same dataset and it can be seen that two-step method outperforms the dual method by a considerable margin.
\end{enumerate}

\begin{figure}
    \centering
    \begin{subfigure}{0.49\textwidth}
        \centering
        \includegraphics[width = \textwidth]{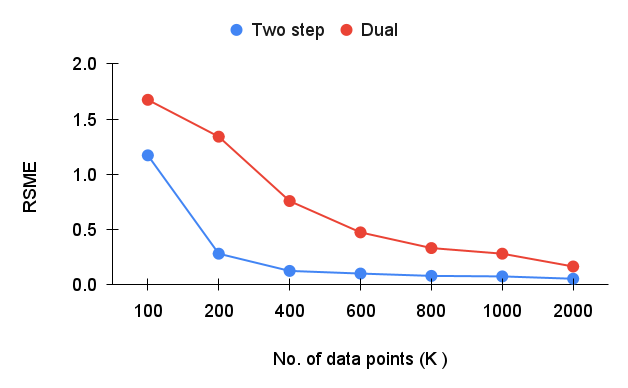}
        \caption{RSME}
    \end{subfigure} 
    \begin{subfigure}{0.49\textwidth}
        \centering
        \includegraphics[width = \textwidth]{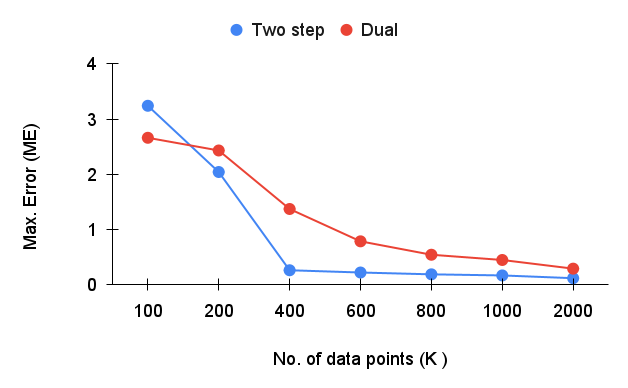}
        \caption{Maximum Error}
    \end{subfigure}
    \caption{Effect of varying the amount of data on a network with $N = 400$, probability of an edge $\rho = 0.005$, noise variance $\sigma = 0.01$, and threshold $\delta = 0.1$.}
    \label{fig:Effect_K}
\end{figure}
    \begin{figure}
    \centering
    \begin{subfigure}{0.49\textwidth}
        \centering
        \includegraphics[width = \textwidth]{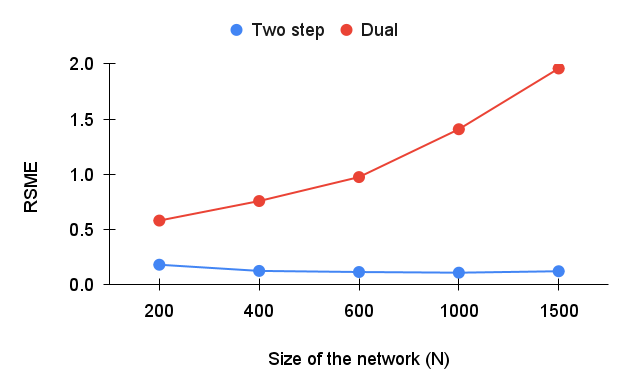}
        \caption{RSME}
    \end{subfigure} 
    \begin{subfigure}{0.49\textwidth}
        \centering
        \includegraphics[width = \textwidth]{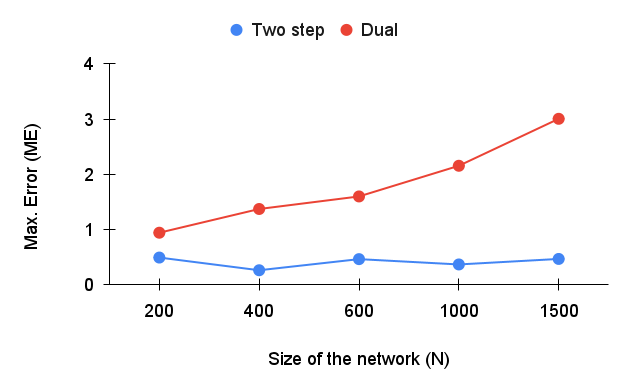}
        \caption{Maximum Error}
    \end{subfigure}
    \caption{Effect of varying the size of a network with data points $K = N$, probability of an edge $\rho = 0.005$, noise variance $\sigma = 0.01$, and threshold $\delta = 0.1$.}
    \label{fig:Effect_N}
\end{figure}
\begin{figure}
    \centering
    \begin{subfigure}{0.49\textwidth}
        \centering
        \includegraphics[width = \textwidth]{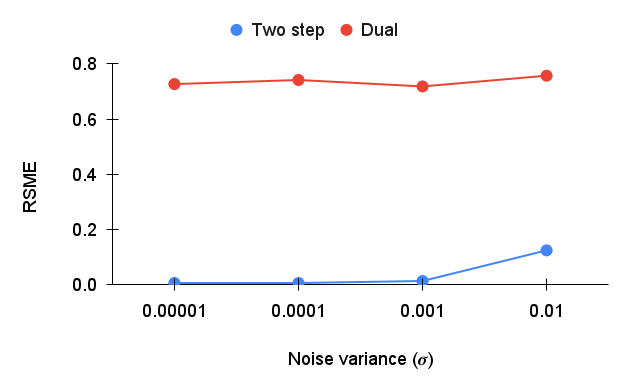}
        \caption{RSME}
    \end{subfigure} 
    \begin{subfigure}{0.49\textwidth}
        \centering
        \includegraphics[width = \textwidth]{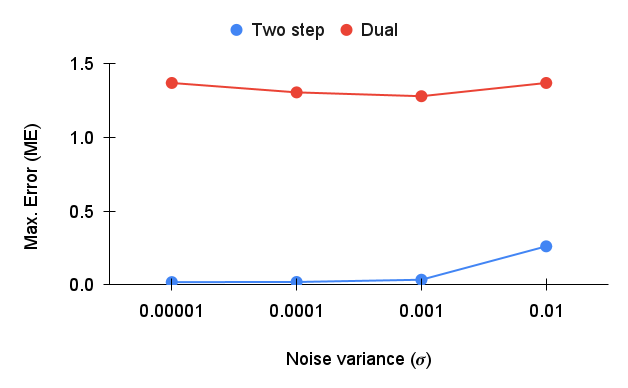}
        \caption{Maximum Error}
    \end{subfigure}
    \caption{Effect of varying the noise variance on a network with $N = 400$, probability of an edge $\rho = 0.005$, number of data points $K = 400$, and threshold $\delta = 0.1$.}
    \label{fig:Effect_sigma}
\end{figure}
\begin{figure}
    \centering
    \begin{subfigure}{0.49\textwidth}
        \centering
        \includegraphics[width = \textwidth]{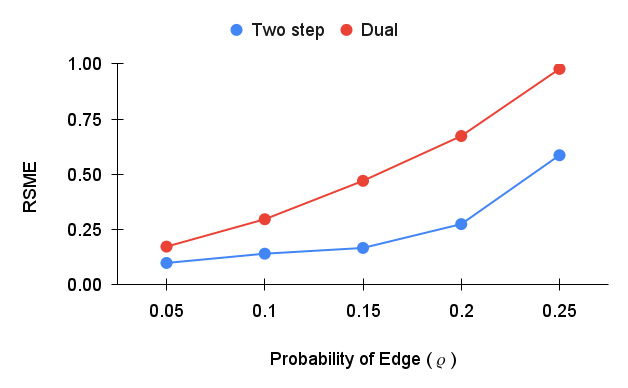}
        \caption{RSME}
    \end{subfigure} 
    \begin{subfigure}{0.49\textwidth}
        \centering
        \includegraphics[width = \textwidth]{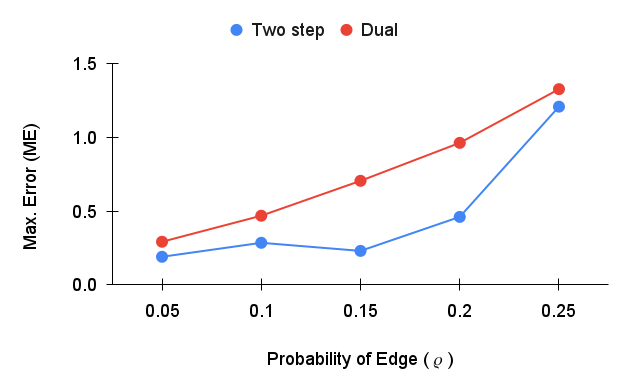}
        \caption{Maximum Error}
    \end{subfigure}
    
    \caption{Effect of varying the edge probability $\rho$ with $N = 50$, noise variance $\sigma = 0.01$, and number of data points $K = 600$.}
    \label{fig:Effect_rho}
\end{figure}
\begin{figure}
    \centering
    \begin{subfigure}{0.49\textwidth}
        \centering
        \includegraphics[width = \textwidth]{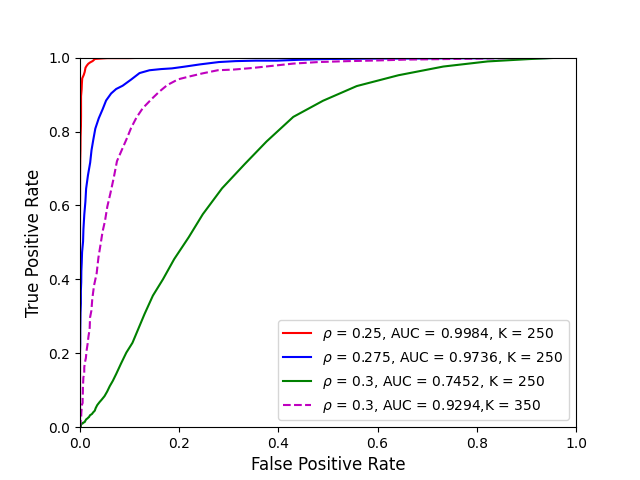}
        \caption{ROC curves for several edge probability values $\rho$}
    \end{subfigure} 
    \begin{subfigure}{0.49\textwidth}
        \centering
        \includegraphics[width = \textwidth]{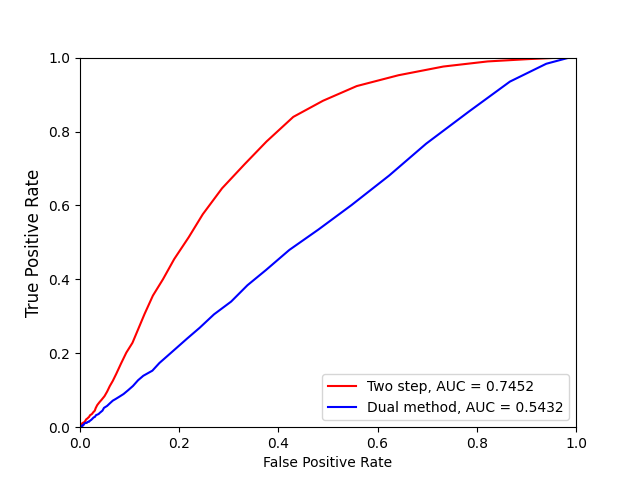}
        \caption{ROC comparison with $\rho = 0.3$ and $K = 250$}
    \end{subfigure}
    
    \caption{ROC curve for the Boolean reconstruction of the Erdos-Renyi network with $N = 75$ nodes.}
    \label{fig:AUC}
\end{figure}

\subsection{Network with a non-polynomial vector field}

We consider a network with $N = 200$ nodes and $4$ inputs, and the dynamics
   \begin{align}
                \dot{x}_i &= \left\{ \begin{matrix} -0.5 x_i^2 -0.5 x_{47i\ \mbox{mod}\ \mbox{n}}+0.7 x_{(i+1)\ \mbox{mod}\ \mbox{n}} - 0.5 \ sin(x_{t_i})  + 1.4 u_1  & \mbox{if}\ (i-1)\ \mbox{mod}\ 4 = 0,\vspace{.1in}\\ 
                 -0.5 x_i + 0.7 x_{i-1}^2 + 0.7 x_{23i\ \mbox{mod}\ \mbox{n}}^3 + 0.7 e^{x_{t_i}} + 1.4 u_4^2  & \mbox{if}\ (i-1)\ \mbox{mod}\ 4 = 1,\vspace{.1in} \\
                 -0.5 x_i + 0.7 x_{(i+1)\ \mbox{mod}\  \mbox{n}}^2 - 0.5 x_{67i\ \mbox{mod}\ \mbox{n}} + 0.5 e^{x_{t_i}} + 1.4 u_2^2  & \mbox{if}\ (i-1)\ \mbox{mod}\ 4 = 2,\vspace{.1in} \\
                 -0.5 x_i^2 - 0.5 x_{i-1}^2 + 0.7 x_{11i\ \mbox{mod} \ \mbox{n}}^3 - 0.5 sin(x_{t_i}) + 1.4 u_3^2 & \mbox{otherwise},\end{matrix} \right.
    \end{align}
    
     where the additional neighbor $x_{t_i}$ of each node $i$ is assigned randomly from the set of all nodes. The data is generated with $K=300$ different initial conditions $X_i \in \mathbb{R}^{200}$ and $U_i\in \Rr^4$ uniformly randomly distributed over $[-1,1]^{N}$ and $[-1,1]^4$, respectively. For the estimation of the vector field using the dual identification method, we use $300$ Gaussian RBF centered around $\bar{X}_i$. The node functions for estimating the neighbors are chosen to be monomials up to degree $2$ on the nodes $x_i$ and inputs $u_i$. For the local identification, the functions $\{x_i,x_i^2,x_i^3,\sin(x_i),e^{x_i}\}$ and $\{u_i,u_i^2\}$ are considered as the dictionary functions. Figure \ref{fig:NonPolyVF} shows the estimated and true parameters at node $100$ for the sampling times $T_s = 0.01,0.05$, and $0.1$. The RSME and the Maximum Error (ME) are given in Table \ref{tab1}. As expected, the local identification error increases as $T_s$ increases.     
    
    \begin{figure}
     \centering    
    \begin{subfigure}{0.45\textwidth}
        \centering
        \includegraphics[width = \textwidth]{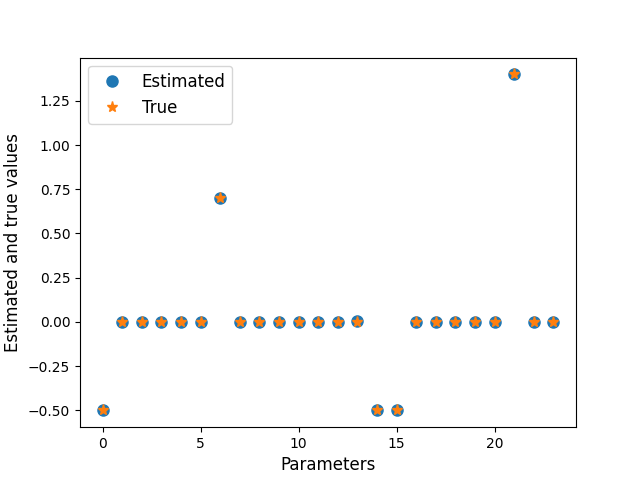}
        \caption{$T_s = 0.01$}
    \end{subfigure}
    \begin{subfigure}{0.45\textwidth}
        \centering
        \includegraphics[width = \textwidth]{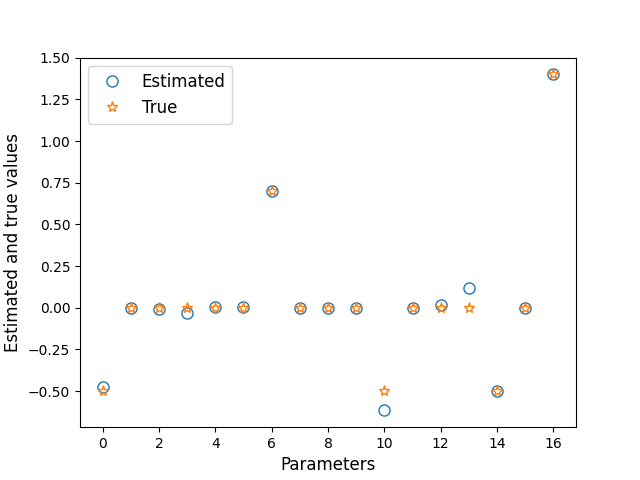}
        \caption{$T_s = 0.05$}
    \end{subfigure}   
    \begin{subfigure}{0.45\textwidth}
        \centering
        \includegraphics[width = \textwidth]{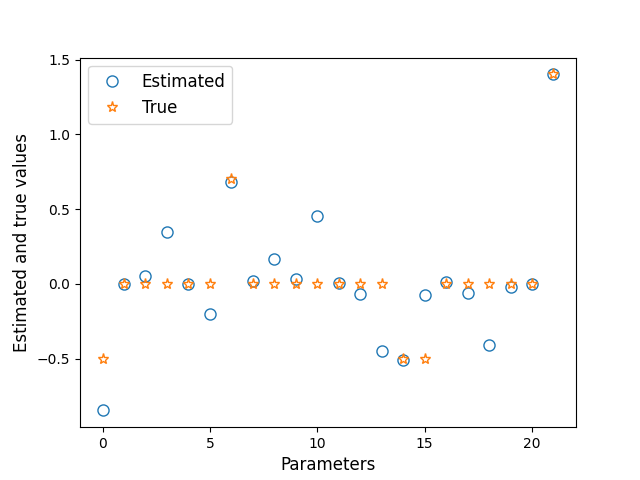}
        \caption{$T_s = 0.1$}
    \end{subfigure}
    \caption{Estimation of the parameters in a network with a non-polynomial vector field at node 50, for different sampling times.}
    \label{fig:NonPolyVF}
    \end{figure}
   
\begin{table}[ht]
\begin{center}
    \begin{tabular}{| c | c | c |}
  \hline			
  \mbox{Sampling time} & \mbox{RSME} & \mbox{Max Error (ME)} \\
  \hline 
  0.01 & 0.016 & 0.049 \\
  \hline 
  0.05 & 0.45 & 1.72 \\
  \hline  
  0.1 & 1.86 & 6.13\\
  \hline 
\end{tabular}
\end{center}
\caption{RSME and maximum error (ME) for the identification of a network with a non-polynomial vector field.}
\label{tab1}
\end{table}

\subsection{Identification of networks with non-scalar local Hindmarsh-Rose dynamics}

We now consider a bio-inspired small world Watts-Strogatz network \cite{BDS_2017} containing $N=75$ nodes with $8$ neighbors in average and a measure of randomness $\beta = 0.5$. The local dynamics is given by the Hindmarsh-Rose (HR) neuron model \cite{IEM2004} described by
\begin{align}
\begin{aligned}
\dot{x}_i &= y_i - b_ix_i^2 +a_i x_i^3 - z_i + C_i, \\
\dot{y}_i &= c_i - d_ix_i^2 - y_i, \\
\dot{z}_i &= \frac{1}{\tau}(s_i\,(x_i -e_i) - z_i),
\end{aligned}
\end{align}
with the additional nonlinear coupling
\begin{align}
C_i = \sum_{j \in \Nei} \frac{4}{1+e^{-\nu_j(x_j - \theta_{ij})}}.
\end{align}
The parameters of the local dynamics are $a_i \in \{1,1.25,1.5,1.75,2\}$, $b_i \in \{2,2.75,3.5,$ $4.25,5\}$, $d_i \in \{ -3,-3.5,-4,-4.5,-5\}$, $s_i \in \{8,11,14,17,20 \}$, and $e_i \in \{ -4,-2,0,2,4\}$. The parameters of the coupling functions are $\tau = 1000$, $\nu_j = 1$, and $\theta_{ij} \in \{-0.5,-1,-1.5\}$. Our goal is to compute the coefficients $a_i,b_i,c_i,d_i,e_i,s_i$, the set $\Nei$ for all $i$, and the parameters $\theta_{ij}$ for $j \in \Nei$.
We generate $500$ initial conditions uniformly distributed over $[-1,1]^{225}$ and use a sampling time $T_s = 0.01$. For the estimation of the vector field using the dual identification method, we use $500$ Gaussian RBFs centered around data points ${X}_k$. In the first step, we use the functions $x_i, x_i^2,y_i,z_i$ (i.e. $300$ node functions) to identify the Boolean network. In the second step, we use $x_i,x_i^2,x_i^3,y_i,z_i,1$ as the dictionary functions for the local states
and
\[
\frac{1}{1+e^{-(x_j + 0.5)}},  \frac{1}{1+e^{-(x_j + 1)}},\frac{1}{1+e^{-(x_j + 1.5)}},
\]
for the neighbors, which leads to $3|\bnei|+6$ dictionary functions in the local identification and a total of $3|\bnei|+18$ parameters (as we consider $6$ local dictionary functions for each local state $x_i,y_i$ and $z_i$, and an additional $3|\bnei|$ dictionary functions for the coupling). 

The Boolean network reconstruction obtained with the first step of the method is shown in Figure \ref{fig:HR_Network3} for different threshold values. It can be seen that the method provides excellent reconstruction of the network for intermediate threshold values ($\delta=0.4$). Table \ref{tab:HR_Network1} gives the statistical properties of the local error $\epsilon_i$, as defined in \eqref{eq:local_error_epsilon} for the identification of each node. The true vs estimated parameters are shown in Figure \ref{fig:HR_Network2} for the node $i$ corresponding to the maximum local error (ME). For this node, the total number of estimated neighbors $\bnei$ is $16$ while the true number of neighbors is $7$, leading to the identification of an unnecessarily large number of parameters ($66$ in total).

\begin{figure}
    \centering
    \begin{subfigure}{0.49\textwidth}
        \centering
        \includegraphics[width = \textwidth]{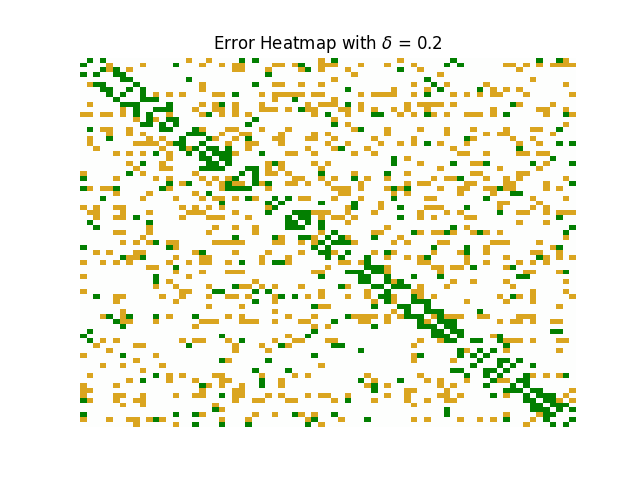}
    \end{subfigure} 
    \begin{subfigure}{0.49\textwidth}
        \centering
        \includegraphics[width = \textwidth]{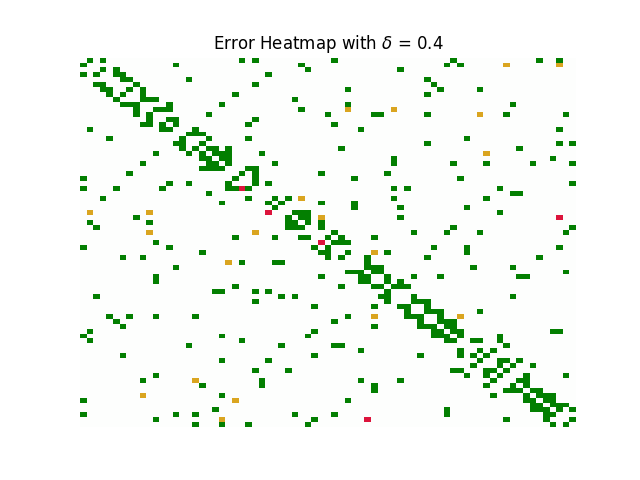}
    \end{subfigure}
    \begin{subfigure}{0.49\textwidth}
        \centering
        \includegraphics[width = \textwidth]{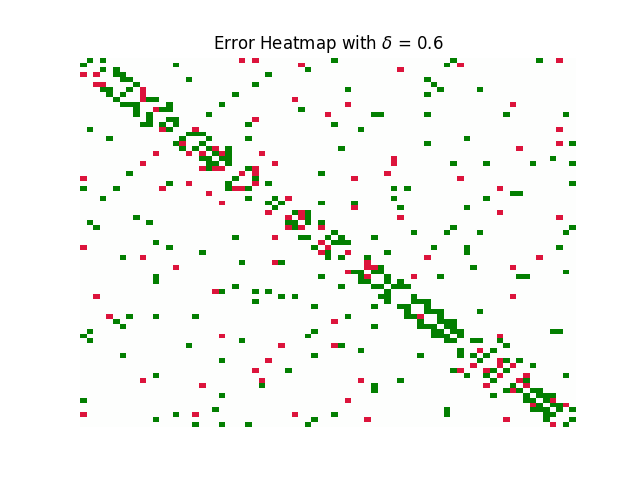}
    \end{subfigure}
    \caption{Boolean reconstruction Watts-Strogatz network for different threshold values $\delta$. Green, yellow, and red dots correspond to correctly identified, false positive, and false negative edges, respectively.}
     \label{fig:HR_Network3}
\end{figure}

\begin{table}[ht]
    \centering
 \begin{tabular}{| c | c |}
  \hline	
  \mbox{Attritbute} & \mbox{Value} \\
  \hline\hline
  \mbox{RSME} & 0.335 \\
  \hline 
  \mbox{Maximum Error} & 1.01 \\
  \hline 
  \mbox{Minimum Error} & 0.056 \\
  \hline  
  \mbox{Standard Deviation} & 0.205\\
  \hline 
\end{tabular} 
         \caption{Statistical properties of local error $\epsilon_i$ for Watts-Strogatz Network  with HR neuron local dynamics}
         \label{tab:HR_Network1}
    \end{table}
    \begin{figure}
        \centering
        \includegraphics[width = .75\textwidth]{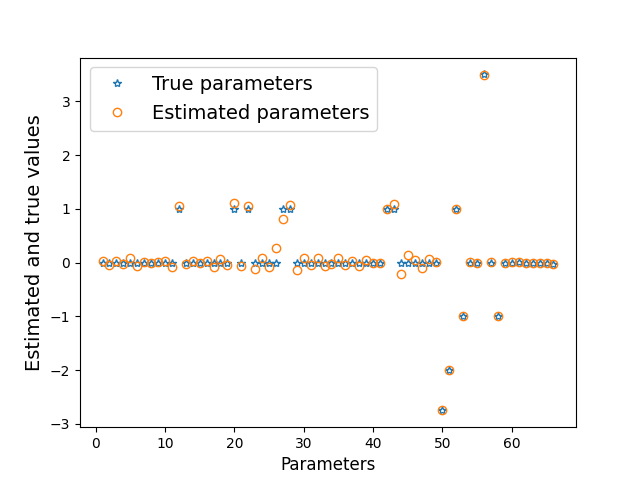}
        \caption{Predicted vs true values of parameters at node $50$}
     \label{fig:HR_Network2}
    \caption{Local identification of a Watts-Strogatz network with Hindmarsh-Rose dynamics.}
\end{figure}

\section{Conclusion}
\label{sec:conclu}

We have developed a Koopman operator based approach to identify nonlinear networked dynamics with external inputs. From nodes and input data, we have estimated the network topology and recovered the (possibly different) local dynamics attached to each node. 
Numerical simulations demonstrate that the proposed identification procedure scales up to large networks without much compromise on the error in identification.

As a by-product to our network identification procedure, we have obtained a linear lifted model for both the local and global dynamics of the network. This provides a modular Koopman operator representation of the nonlinear network dynamics, which could be used for analysis and control design. One prospective application in this direction would be to develop data-driven simulation models of complex nonlinear networks arising in engineering applications. An immediate extension is to apply this framework to models arising from Neural Networks, as well as networks with time-varying topology. Besides identification, investigations on the consensus of nonlinear networks through this framework could be another promising future direction to pursue. 


\bibliographystyle{spmpsci}      
\bibliography{mybib.bib}   

\end{document}